\newtheorem{theorem}{Theorem}
\newtheorem{observation}[theorem]{Observation}
\newtheorem{problem}[theorem]{Problem}
\newtheorem{example}[theorem]{Example}
\newtheorem{definition}[theorem]{Definition}
\newtheorem{lemma}[theorem]{Lemma}
\newtheorem{claim}[theorem]{Claim}
\newcommand{\p}{{\rm P}}
\newcommand{\fp}{{\rm FP}}
\newcommand{\fl}{{\rm FL}}
\newcommand{\np}{{\rm NP}}
\newcommand{\optl}{{\rm OptL}}
\newcommand{\nc}{{\rm NC}}
\newcommand{\fnc}{{\rm FNC}}
\newcommand{\nl}{{\rm NL}}
\newcommand{\BibTeX}{B\kern-.05em{\sc i\kern-.025em b}\kern-.08em\TeX}
\begin{document}

\title{On the Parallelizability of Approval-Based Committee Rules}
\date{August 4, 2025}
\author{Zack Fitzsimmons\\
 Dept.\ of Math.\ and Computer Science\\
 College of the Holy Cross\\
 Worcester, MA 01610 \and
 Zohair Raza Hassan\\
 Department of Computer Science\\
  Rochester Institute of Technology \\
  Rochester, NY 14623 \and
  Edith Hemaspaandra\\
  Department of Computer Science\\
  Rochester Institute of Technology \\
  Rochester, NY 14623}%

\maketitle

\begin{abstract}
Approval-Based Committee (ABC) rules are an important tool for choosing a fair set of candidates when given the preferences of a collection of voters. Though finding a winning committee for many ABC rules is \np-hard, natural variations for these rules with polynomial-time algorithms exist. The recently introduced Method of Equal Shares, an important ABC rule with desirable properties, is also computable in polynomial time. However, when working with very large elections, polynomial time is not enough and parallelization may be necessary. We show that computing a winning committee using these polynomial-time ABC rules (including the Method of Equal Shares) is \p-hard, thus showing they cannot be parallelized. In contrast, we show that finding a winning committee can be parallelized when the votes are single-peaked or single-crossing for the important ABC rule Chamberlin-Courant.
\end{abstract}

\section{Introduction and Related Work}

Elections are a widely-used tool for a group of agents to reach a collective
decision. Multiwinner voting rules are used for elections where instead of a single winner, the desired outcome is a set of winners of a given size (see~\citet{fal-sko-sli-tal:b:trends-multiwinner}).
Approval-based committee (ABC) rules are 
an important and widely-studied type of multiwinner rule (see~\citet{lac-sko:b:abc-rules}), where voters express
dichotomous preferences (approval/disapproval) over the set of candidates.

Finding a winning committee for many ABC rules is \np-hard~\citep{azi-gas-gud-mac-mat-wal:c:comp-aspects-multiwinner,BrillFJL24,GodziszewskiB0F21,LeGrandMM07,pro-ros-zoh:j:proportional-representation,SkowronFL16}.
However, natural variations of these \np-hard rules exist where a winning set of candidates can be found in polynomial time. In addition, the Method of Equal Shares, which has desirable properties---such as satisfying a notion of group fairness known as Extended Justified Representation, can be computed in polynomial time~\citep{pet-sko:c:method-equal-shares}.

It is important that we can always easily compute the outcome of an election. 
Previous work on studying the complexity of winner determination has largely
focused on whether the winner problem is in \p\ or \np-hard, equating being in \p\ with being easy to compute.
However, elections can be very large (e.g., \citet{boe-bro-cev-geh-san-sch:c:abc-in-practice} note that blockchains that use the Nominated Proof-of-Stake protocol (e.g., Polkadot) use ABC rules to elect ``validators'' to ensure the integrity of the blockchain).
In such cases
it is not enough to just have a polynomial-time algorithm.
In this case parallelization may be necessary.

The parallelizability of voting rules was previously considered for single-winner voting rules~\citep{csa-lac-pic-sal:c:winner-determination-mapreduce,csa-lac-pic:c:schulze-large-scale},
but
not for the case of multiwinner rules. In fact, the recent textbook on ABC rules by \citet{lac-sko:b:abc-rules} explicitly includes as open question ``Q20'' 
 the task of determining which polynomial-time ABC rules are inherently sequential, i.e., do not have efficient parallel algorithms.
 As pointed out  by~\citet{lac-sko:b:abc-rules}, Approval Voting (AV) is clearly
 parallelizable (and the analogous result for Satisfaction Approval Voting (SAV) follows from the same argument).
We show that computing the winning committee for all other studied polynomial-time ABC rules is inherently sequential.

To show that a problem is inherently sequential we use the notion of \p-hardness, since it is generally assumed that \p-hard problems are not parallelizable \citep{gre-hov-ruz:b:limits}.
We mention that 
this approach was used previously in computational social choice
to show results about single-winner elections (specifically that winner determination for resolute versions of STV and Ranked Pairs is \p-complete~\citep{csa-lac-pic-sal:c:winner-determination-mapreduce,csa-lac-pic:c:schulze-large-scale,csa-lac-pic:t:schulze-large-scale}), 
problems related to the aggregation of CP-nets~\citep{luk-mal:j:aggregation-over-cpnets}, and for finding the essential set~\citep{bra-fis:j:minimial-covering}.
We also mention a line of research that studies the parallelizability of problems in fair allocation~\cite{gar-pso:t:fair-allocation-parallel,zhe-gar:c:distributed-algos-allocation}.

When studying
the complexity of a voting problem it is natural to consider settings where
the votes of the electorate have structure. Two important domain restrictions
are single-peaked and single-crossing preferences~\citep{bla:j:rationale-of-group-decision-making,mir:j:single-crossing}, which each model settings where the votes of the electorate are based on a single dimension (e.g., liberal vs.\ conservative). 
Both of these
models have natural counterparts for approval voting~\citep{elk-lac:c:dichotomous,fal-hem-hem-rot:j:single-peaked-preferences}.
We show that for single-peaked and for single-crossing approval votes finding a winning committee using the important Chamberlin-Courant ABC rule can be parallelized.
We note here that the only previously-known parallelizability results for voting rules that we are aware of (in addition to AV and SAV mentioned previously~\citep{lac-sko:b:abc-rules}) are the one for Schulze Voting~\citep{csa-lac-pic:c:schulze-large-scale} as well
as computing the Copeland, Smith, and Schwartz choice sets~\citep{bra-fis-har:j:choice-sets}.

Our main contributions are summarized below.

\noindent
\textbf{Inherently sequential ABC rules.}
We provide \p-hardness reductions to prove the following theorem which encompasses all studied polynomial-time ABC rules from \citet{lac-sko:b:abc-rules} except AV and SAV which are easily seen to be parallelizable \citep{lac-sko:b:abc-rules}; in fact they are computable in logarithmic space (see Section~\ref{sec:av-sav-fl} of the appendix for completeness).

\begin{theorem}
    Computing a winning committee is inherently sequential (\p-hard) for
    seq-CC,
    seq-PAV,
    rev-seq-CC, 
    rev-seq-PAV,
    seq-Phragmén,
    Greedy Monroe,
    and the Method of Equal Shares.
\end{theorem}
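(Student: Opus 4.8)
The plan is to exhibit, for each rule, a logarithmic-space many-one reduction from a \p-complete problem to the decision version of winner determination, namely: given an election, a committee size $k$, a fixed tie-breaking priority order (to make the rule resolute), and a distinguished candidate $c$, is $c$ in the committee the rule returns? Log-space (rather than polynomial-time) reductions are essential here, since only reductions computable below \p\ can witness inherent sequentiality. The natural source problem is the Monotone Circuit Value Problem (MCVP), which is \p-complete even for circuits of bounded fan-in and fan-out whose gates are presented in topological order; its gate-by-gate evaluation is precisely the long dependency chain we want to embed into a greedy committee-selection process.

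I would first carry out the reduction for seq-CC, whose step rule---repeatedly add the candidate covering the most as-yet-unrepresented voters---is the most transparent. The construction has two layers. A \emph{sequencing gadget}, built from candidates equipped with strictly decreasing blocks of ``private'' voters, forces the greedy rule to select candidates in a prescribed order mirroring the topological order of the gates, with $k$ tuned so the process halts exactly when every gate has been processed. On top of this, each gate is realized by a gate candidate together with shared voters linking it to the candidates for its inputs, arranged so that the gate candidate attains the (tie-broken) maximum marginal coverage exactly when the AND/OR semantics say its output is $1$. Taking $c$ to be the output-gate candidate, $c$ is in the committee iff the circuit evaluates to $1$; correctness is an induction over the topological order showing that after the block for gate $g$, candidate $g$ is selected iff $g$ evaluates to $1$.

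The remaining rules are handled by re-tuning this same skeleton to their scoring dynamics. For seq-PAV I would replace coverage counts by carefully chosen voter multiplicities so that the harmonic marginal gains reproduce the same forced order and the same gate behavior. For rev-seq-CC and rev-seq-PAV I would mirror the construction so that the order in which candidates are \emph{deleted} simulates the evaluation, with $c$ surviving iff the output is $1$. For seq-Phragm\'en, Greedy Monroe, and the Method of Equal Shares, the analogous gadgets are phrased in terms of accumulated load and per-voter budgets, with the numerical parameters (budgets, voter counts, gate thresholds) chosen so that a gate candidate becomes affordable, or is bought, exactly when its inputs dictate.

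The main obstacle is the robustness of the simulation across these different objectives. Because each rule optimizes a different quantity, the gadgets must be re-tuned rule by rule, and the chief difficulty is ruling out spurious ties and unintended cross-talk: ensuring that at every step the intended candidate is the strict optimum (or is favored by the fixed tie-breaking order), and that ``wire'' values propagate without leakage between gate gadgets. I expect this to be most delicate for seq-Phragm\'en and the Method of Equal Shares, where the fractional, continuous budget dynamics make the bookkeeping subtler than the purely combinatorial coverage argument for seq-CC; verifying that all the required thresholds can be realized with polynomially bounded integer weights is the crux of those cases.
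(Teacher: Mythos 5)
Your overall framing matches the paper's---both reduce a \p-complete problem, in logspace, to the decision question ``is a distinguished candidate $c$ in the committee?'', relying on the equivalence of that decision problem with the function problem---but the paper never goes near MCVP or gate gadgets. For seq-CC it reduces from Ordered Vertices Remaining (OVR): vertices become candidates, edges become two-candidate votes, and deleting the highest-degree vertex \emph{is} the seq-CC greedy step, so the reduction is essentially the identity map. For all remaining rules it reduces from Lexicographically First Maximal Independent Set (LFMIS), strengthened in the appendix to remain \p-complete on 3-regular graphs. Regularity is the paper's key device, and it is the idea your plan has no substitute for: on a 3-regular graph every candidate has exactly three voters, and after padding with dummy candidates the initial MES budget is exactly $1/3$, so every affordable candidate has price $\rho_c = 1/3$; electing a candidate zeroes its voters' budgets, making all its graph-neighbors permanently unaffordable. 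The continuous budget dynamics thus collapse into exactly the combinatorial LFMIS process, and analogous padding-and-threshold claims handle seq-PAV (and in fact an infinite class of sequential Thiele rules), the rev-seq variants, seq-Phragm\'en, Greedy Monroe, and the two-phase MES+seq-Phragm\'en rule.

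Measured against that, your proposal has a genuine gap rather than merely a different decomposition: the crux---constructing the sequencing and gate gadgets and verifying their thresholds---is asserted, not carried out, and precisely where you flag delicacy (seq-Phragm\'en and MES) is where the plan is most likely to break as stated. In MES the price $\rho_c$ depends jointly on the number of approvers and their current fractional budgets, and monotone circuits require fan-out, so a ``wire'' value must feed several gates without the first gate's payment destroying the budgets the later gates depend on; your sketch offers no mechanism for this, and it is exactly the cross-talk you name without resolving. You would also need to handle MES terminating early when budgets run out (the paper proves hardness separately for the first phase alone and for MES completed by seq-Phragm\'en, as the textbook defines the rule). None of this is unfixable---OVR and LFMIS are themselves proved \p-complete by circuit simulations, so your route amounts to re-deriving those theorems inside each voting rule---but as written the proposal is a program whose hardest steps are the ones left open, whereas the paper's choice of intermediate problems whose dynamics already mirror the greedy rules makes each reduction nearly trivial and confines all numerical bookkeeping to short, checkable claims.
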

We show that seq-CC and the Method of Equal Shares are inherently sequential in Section~\ref{sec:p-hard-main}. The remaining rules are discussed in Section~\ref{sec:p-hard-appendix} of the appendix. In fact, we show that some of these rules are special cases of an infinite class of rules that we prove the \p-hardness of.

\smallskip

\noindent
\textbf{Parallelizability under domain restrictions.}
For the well-studied domain restrictions of single-peaked and single-crossing approval votes, we show that finding a winning committee for the important multiwinner rule Chamberlin-Courant can be parallelized (see Section~\ref{sec:domain}).

Instead of providing parallel algorithms, we show membership in the relatively unknown complexity class \optl\ from \citet{alv-jen:j:logcount} (Definition ~\ref{def:optl}). 
This allows for stronger results (since \optl\ is thought to be a strict subset of the class of parallelizable problems) and simpler proofs (since describing parallel algorithms can be an arduous task, e.g., one has to decide on a model and describe how each machine aggregates information, etc.). 

To the best of our knowledge, we are the first to use this class to show parallelizability, and we expect that this approach will be useful to further explore the parallelizability of other polynomial-time problems.

\section{Preliminaries}
\label{sec:prelim}

\subsection{Approval-Based Committee (ABC) Rules}

An approval-based committee election consists of a set of candidates $C$, a collection
of votes $V$ where each $v \in V$ is a set of approved candidates,
 and a desired
committee size $k \in \mathbb{N}$. The output of an approval-based committee rule is a set of
$k$ candidates referred to as the winning committee.

We study a large number of ABC rules, which we generally define either where they are first used or in the appendix. We present the formal definition of Chamberlin-Courant~\citep{cha-cou:j:proportional-rep} below since it will be relevant for results in Section~\ref{sec:cc} and Section~\ref{sec:domain}, and it allows us to provide an illustrative example below.

\begin{definition}[Chamberlin-Courant (CC)]
    Given candidates $C$, collection of votes $V$, and committee size $k$, 
    output a committee $W$ of size $k$ such that it
    maximizes the number of voters that have at least one approved candidate in the committee.
\end{definition}
\begin{example}
Suppose we have candidates $C = \{a,b,c\}$, committee size $k = 2$, and the following collection of votes:
\[
V = \left[ \begin{array}{cccccc}
\{a\}, & \{a,c\}, & \{a,c\}, & \{b,c\}, & \{b,c\}, & \{b\}
\end{array} \right]
\]
The winning committee computed by CC is $\{a,b\}$, since the maximum number of voters (in this case, all voters) approve of a candidate in $\{a,b\}$.
\end{example}

\subsection{Computational Complexity}

Most of our results concern showing problems to be inherently
sequential. 
It is widely believed that problems that are \p-hard
are inherently sequential~\citep{gre-hov-ruz:b:limits}. %
Note that when showing a problem is hard for a given complexity class we must be careful to use a
reduction with less computational power than the class we are showing hardness for. %
As is standard, we use logspace reductions for our \p-hardness results, meaning that the reduction is computable in logarithmic space.\footnote{Though logarithmic space may seem quite restrictive, standard \np-hardness reductions are computable in logarithmic space (see, e.g.,~\citet{gar-joh:b:int}).}

Most standard complexity classes such as \p\ and \np\ are concerned with decision problems, whose algorithms accept or reject inputs.
However, we want to output the winning committee for ABC rules, i.e., we are looking at function problems, and so we need to also look at function complexity classes.

The complexity class \nc\ (and its function counterpart \fnc) corresponds to the class of problems that have (efficient) parallel algorithms, namely, algorithms that run in polylogarithmic time 
using a polynomial number of processors (see~\citet{pap:b:complexity}).

It is easy to see that \fnc\ is a subset of \fp, the function counterpart of \p.
Our results that show problems inherently sequential are based on the generally accepted assumption that \fnc $\neq$ \fp, and we show
inclusion in \fnc\ for parallelizability. 
In fact, we show inclusion in \optl, a subset of \fnc,
which can be easier than directly presenting and analyzing parallel algorithms. To the best of our knowledge, we are the first to use \optl\ to show the parallelizability of a problem.

\section{Inherently Sequential ABC Rules}
\label{sec:p-hard-main}

This section establishes the \p-hardness of ABC rules. We highlight Chamberlin-Courant (CC) and the Method of Equal Shares (MES), while the \p-hardness results for other ABC rules are deferred to the appendix. Before we present our proofs, we discuss an important technical consideration. 

Recall that finding the winning committee\footnote{Note that the
polynomial-time ABC rules from \citet{lac-sko:b:abc-rules} explicitly include lexicographic tie-breaking, and so we can talk about \emph{the} winning committee. However, we will also show that our hardness results go through when tie-breaking is never invoked, which shows that the hardness is not caused by tie-breaking.
} is a function problem.
However, it gives more insight 
to prove the hardness of a related decision problem. 
Thus, we consider the decision problem that asks about the containment of a single candidate.
First, we show that this formulation is correct by showing the equivalence of the decision and function problem with respect to parallelizability.

\begin{observation}
    Suppose we are given a set of candidates $C$, collection of votes $V$, committee size $k$, and a candidate $c \in C$.
    For any ABC rule, finding the winning committee $W$ is parallelizable if and only if deciding whether $c$ belongs to $W$ is parallelizable.
\end{observation}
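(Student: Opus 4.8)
The plan is to prove the two directions of the biconditional separately, with the forward direction being essentially immediate and the backward direction relying on the closure of parallel complexity classes under polynomially-bounded parallel composition. For the direction that assumes finding $W$ is parallelizable, I would note that deciding whether $c \in W$ is obtained by running the given parallel algorithm that outputs $W$ on input $(C,V,k)$ and then testing whether the designated candidate $c$ appears in its output. Since testing set membership is trivially parallelizable (indeed it is computable in logarithmic space), and since $\fnc$ is closed under composition with such a test, the decision problem is parallelizable as well.

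The more interesting direction assumes that deciding $c \in W$ is parallelizable and must conclude that $W$ itself can be found in parallel. Here I would exploit the fact that the committee is completely determined by the membership answers over all candidates, since $W = \{c \in C : c \in W\}$. The idea is to invoke the membership-decision algorithm in parallel, once for each candidate $c \in C$, and then collect precisely those candidates for which the answer is ``yes.'' Because there are only polynomially many candidates and each invocation runs in polylogarithmic time using a polynomial number of processors, the combined computation still runs in polylogarithmic time using a polynomial number of processors; hence outputting $W$ is in $\fnc$ as well.

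The step I expect to require the most care is verifying this closure property precisely for whichever class formalizes ``parallelizable'' (for us $\fnc$, or its subclass $\optl$): one must confirm that placing polynomially many independent parallel subcomputations side by side does not increase the depth beyond polylogarithmic and keeps the processor count polynomial, and that gathering their outputs into a single set is itself a parallel operation. This is standard for $\nc$ and $\fnc$, but it is the crux of the argument, and it is precisely why the statement is phrased in terms of parallelizability rather than exact resource bounds. With both directions in hand, the observation follows, justifying the use of the single-candidate containment decision problem as a proxy for the committee-finding function problem throughout the paper.
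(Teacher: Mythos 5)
Your proposal is correct and follows essentially the same route as the paper's proof: run the committee-finding algorithm and test membership for one direction, and run the membership decider in parallel over all candidates and collect the accepted ones for the other. The paper states both directions at exactly this level of brevity, leaving the $\fnc$ closure-under-parallel-composition bookkeeping (which you rightly flag as the only point needing care) implicit.
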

\begin{proof}
    Suppose there is a parallel algorithm $X$ 
    that outputs the winning committee $W$. Then, one can run $X$ to find $W$ and simply check whether $c \in W$ to accept/reject for the decision problem.
    Now, suppose there is a parallel algorithm $X$ that decides whether $c$ belongs to $W$. In this case, one can run $X$ for each candidate in parallel and output the $k$ candidates for which $X$ accepts.
\end{proof}

\subsection{Chamberlin-Courant}
\label{sec:cc}

Recall that the Chamberlin-Courant rule picks $k$ candidates that hit the most number of votes; a vote is ``hit'' if at least one of the candidates it approves is selected in the winning committee. %

While computing the winning committee for CC is \np-hard~\citep{pro-ros-zoh:j:proportional-representation} in the general case, 
simple greedy approaches have been introduced to mimic CC in polynomial time. One such approach is the sequential Chamberlin-Courant rule, denoted seq-CC, where we start with an empty committee and iteratively add the candidate that increases the number of currently hit votes the most (see~\citet{lac-sko:b:abc-rules}).

\begin{definition}[seq-CC]
    Suppose we are given candidates $C$, collection of votes $V$, and committee size $k$.
    seq-CC sequentially builds a committee of size $k$ by repeatedly picking the candidate that increases the number of hit votes the most.
    Ties are broken lexicographically.
\end{definition}

We will show that computing the winning committee using seq-CC is \p-hard. Specifically, we show that deciding whether a given candidate is picked for the seq-CC winning committee is \p-hard.
To do this, we give a logspace reduction from a closely related \p-complete problem, OVR, defined below.
We note that as with \np-hardness reductions, half the battle is finding the right problem to reduce from, and we were able to find a problem almost equivalent to seq-CC.
We hope that the 
straightforward nature of our reduction will
ease the reader into \p-hardness reductions.

\begin{problem}[Ordered Vertices Remaining (OVR)~\citep{gre:j:ovr}]
We are given an undirected graph $G$, a vertex $v \in V(G)$, and an integer $k$.
Suppose we repeatedly delete the highest-degree vertex from $G$ until no vertices remain---ties are broken lexicographically.
Decide whether there are at least $k$ vertices remaining after deleting $v$.
\end{problem}

\begin{theorem}
\label{thm:seqcc}
Computing the winning committee for seq-CC is inherently sequential.
\end{theorem}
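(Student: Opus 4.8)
The plan is to combine the preceding Observation with a logspace reduction from OVR. By the Observation, computing the seq-CC committee is parallelizable if and only if deciding whether a fixed candidate $c$ lies in that committee is parallelizable, so it suffices to show this membership decision problem is \p-hard; since OVR is \p-complete~\citep{gre:j:ovr}, a logspace reduction from OVR to the membership problem shows the latter is \p-hard and hence (under the assumption that \p-hard problems are inherently sequential) not parallelizable, and the Observation then transfers this to the function problem of computing the committee. The guiding insight I would exploit is that one round of seq-CC behaves exactly like one deletion step of OVR once each edge is turned into a two-candidate approval vote.

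Concretely, given an OVR instance $(G,v,k)$ with $n = |V(G)|$, I would construct the election whose candidate set is $C = V(G)$, whose votes are $\{\,\{x,y\} : \{x,y\} \in E(G)\,\}$ (one approval vote per edge, approving its two endpoints), whose committee size is $n-k$, and whose query candidate is $v$; the degenerate case $k \ge n$ (where $n-k \le 0$) I would map to a fixed rejecting instance. This map is plainly computable in logarithmic space: it copies the vertex names as candidate names so that the two lexicographic orders coincide, rewrites each edge as a vote, and outputs $n-k$, which needs only a counter.

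The heart of the argument is the invariant that, after seq-CC has selected a set $S \subseteq C$, the marginal gain of any unselected candidate $x$ — the number of as-yet-unhit votes approving $x$ — equals $\deg_{G-S}(x)$, the degree of $x$ in the graph with the vertices of $S$ removed. This holds because an edge-vote $\{x,y\}$ is unhit exactly when neither endpoint lies in $S$, so for $x \notin S$ the unhit votes through $x$ are precisely the edges from $x$ to $V(G)\setminus S$. I would then prove by induction on the round that, with matching lexicographic tie-breaking, the candidate seq-CC selects in each round is exactly the vertex OVR deletes: both pick the lexicographically first element of the same set $C \setminus S$ (equivalently, the remaining vertices) attaining the maximum of the same quantity, by the invariant. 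Consequently the first $n-k$ candidates chosen by seq-CC are, in order, the first $n-k$ vertices deleted by OVR.

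Finishing is then bookkeeping: if $v$ is the $i$-th vertex deleted, then $n-i$ vertices remain after its deletion, so OVR accepts iff $n-i \ge k$, i.e.\ iff $i \le n-k$, i.e.\ iff $v$ is among the first $n-k$ selected candidates, i.e.\ iff $v$ belongs to the size-$(n-k)$ seq-CC committee. The main obstacle I anticipate is not any single step but the care needed to make the correspondence airtight: verifying the degree/marginal-gain invariant through rounds in which both already-selected candidates and isolated (degree-zero) vertices contribute gain $0$, and checking that seq-CC and OVR resolve ties over corresponding sets under the same lexicographic convention, since the entire equivalence rests on the two procedures never diverging.
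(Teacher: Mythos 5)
Your proposal is correct and takes essentially the same route as the paper: the identical reduction from OVR (one candidate per vertex, one two-candidate approval vote per edge, committee size $\|V(G)\|-k$, query candidate $v$), with the same key correspondence that the highest-degree remaining vertex is exactly the candidate with maximal marginal gain under matching lexicographic tie-breaking. Your explicit degree/marginal-gain invariant, the round-by-round induction, and the handling of the degenerate case $k \geq \|V(G)\|$ simply spell out details the paper's shorter argument leaves implicit.
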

\begin{proof}
Suppose we are given an instance, $(G,v,k)$, of OVR. We will construct an election with candidates $C$, collection of votes $V$, and committee size $k'$, and specify a candidate $c$ such that the winning committee selected by seq-CC includes $c$ if and only if $k$ vertices remain after $v$ is deleted. 
We provide an example of our reduction in Figure~\ref{fig:ovr-seqcc}.

Let $C$ have a candidate for each vertex of $G$,
and let
$V$ have a vote for each edge such that edge $\{a,b\} \in E(G)$ corresponds to a vote for candidates $a$ and $b$.
Finally, we set $c = v$ and $k' = \|V(G)\| - k$.

Recall that during each round of seq-CC, we pick the candidate that hits the most new votes. Similarly in OVR, we delete the vertex with highest degree. In our reduction, the vertex of the highest degree corresponds exactly to the candidate that hits the most unhit votes.
Thus, if $v$ is among the first $\|V(G)\| - k$ candidates selected with seq-CC, then there must be at least $k$ vertices remaining in $G$ after deleting $v$ in OVR. 

The reduction is clearly in logspace since we are essentially just copying the input and computing $k'$. 
\end{proof}

\begin{figure}[t]
    \centering
    \includegraphics[width=0.4\linewidth]{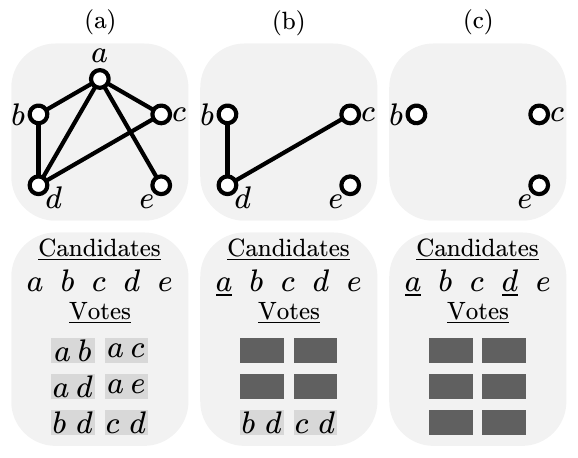}
    \caption{
    Column (a) shows the graph input to OVR at the top and the corresponding seq-CC election at the bottom. Columns (b) and (c) show the state of OVR and seq-CC after one and two rounds, respectively. 
    Candidates are underlined when they are put in the winning committee and votes are darkened when they are hit.
    Observe how the removal of the highest-degree vertex in OVR corresponds to its inclusion in the winning committee in seq-CC.}
    \label{fig:ovr-seqcc}
\end{figure}

We will see in Section~\ref{sec:thiele} of the appendix that a reduction from a different problem provides a more general result; there we analyze the \p-hardness of a class of rules, which includes seq-CC, known as sequential Thiele methods. However, we decided to include the reduction from OVR to provide a simple, illustrative \p-hardness proof in the main text.

As mentioned previously, the
polynomial-time ABC rules from \citet{lac-sko:b:abc-rules} explicitly include lexicographic tie-breaking, and so Theorem~\ref{thm:seqcc} proves that seq-CC in the definition of \citep{lac-sko:b:abc-rules} is inherently sequential (P-hard). However, one might worry that the tie-breaking is causing the hardness. We now show that that is not the case.

\begin{theorem}
    Computing the winning committee for seq-CC is inherently sequential even when tie-breaking is never invoked.
\end{theorem}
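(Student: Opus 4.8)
The plan is to reuse the reduction from Theorem~\ref{thm:seqcc} and perturb it so that the greedy process of seq-CC never produces a tie, while still faithfully simulating OVR. Recall that in that reduction each edge $\{a,b\}$ of $G$ becomes a vote $\{a,b\}$, so the marginal gain of a candidate equals its current degree in the partially-deleted graph. The difficulty is that equal-degree vertices produce equal marginal gains, forcing seq-CC to break ties---and OVR itself breaks ties lexicographically. I would therefore encode lexicographic tie-breaking directly into the marginal gains via small private perturbations, so that the unique argmax at every step coincides with the vertex OVR deletes.

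Concretely, let $n = \|V(G)\|$ and fix the lexicographic ordering $v_1 <_{\mathrm{lex}} \cdots <_{\mathrm{lex}} v_n$ of the vertices. First I would scale the structure by replacing each edge-vote with $N = n+1$ identical copies, so that the contribution of the ``real'' (shared) votes to any marginal gain is a multiple of $N$. Then, for each vertex $v_i$, I would add $w_i = n - i + 1$ private votes approved only by the candidate $v_i$; these are never hit until $v_i$ is selected, so they contribute exactly $w_i$ to the marginal gain of $v_i$ at every step before it is chosen. The remaining parameters $c = v$ and $k' = \|V(G)\| - k$ are unchanged.

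The analysis has two parts. First, I would show that the selection order is unchanged from OVR's deletion order: the scaled gain of an unselected candidate $v_i$ is $N d_i + w_i$, where $d_i$ is its current unhit degree; since $w_i \in \{1,\dots,n\}$ and $N = n+1$, a difference of one in real degree dominates any difference in the perturbation ($N(d_i - d_j) + (w_i - w_j) \ge N - (n-1) > 0$ whenever $d_i > d_j$), so higher real degree always wins, and among equal-degree candidates the larger $w_i$---equivalently the lexicographically smaller vertex---wins, exactly matching OVR. Second, and this is the point of the theorem, I would observe that these gains are pairwise distinct at every step: if $d_i \neq d_j$ the gains differ by at least $2$, and if $d_i = d_j$ they differ by $w_i - w_j \neq 0$. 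Hence the argmax is always unique and tie-breaking is never invoked. The decision equivalence then follows exactly as in Theorem~\ref{thm:seqcc}: $c = v$ lies among the first $k'$ selected candidates if and only if at least $k$ vertices remain after $v$ is deleted in OVR.

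The main obstacle is getting the interaction between scaling and perturbation right: the perturbations must be large enough to break every tie yet small enough never to override a genuine degree difference, which is exactly what the choice $N = n+1$ with $w_i \le n$ secures. The only other thing to verify is that the construction remains logspace; producing $N$ copies of each edge-vote and computing the lexicographic rank of each vertex (by counting how many vertices precede it) are both logspace, so the reduction is a logspace reduction as required.
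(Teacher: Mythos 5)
Your proof is correct and rests on essentially the same mechanism as the paper's: scale the real votes by a factor exceeding the maximum padding and give candidate $i$ a lexicographically decreasing number of private singleton votes, so that every marginal-gain argmax is unique while the selection order is unchanged. The only difference is packaging---the paper applies this scale-and-pad transformation generically to an arbitrary seq-CC instance (thereby reusing Theorem~\ref{thm:seqcc} as a black box), whereas you instantiate it directly on the OVR reduction with the dominance inequalities spelled out; both yield the same result.
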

\begin{proof}
    Suppose we are given an instance of seq-CC with candidates $C = \{c_1, c_2, \ldots, c_m\}$ in lexicographic order, a collection of votes $V$, and committee size $k$.
    We construct a new collection of votes, $V'$, for which the same candidates as for $V$ will be selected, but for which tie-breaking is never invoked.
    Let $V'$ be a collection of votes with $m$ copies of each vote in $V$. In addition, we add padding votes: for candidate $c_i$, we add $m-i$ votes that approve of only $c_i$. 
    Note that the padding votes matter only in case of ties during the execution of seq-CC on the original election and so the elections select the same committee but the execution of seq-CC on the modified election never has ties.
\end{proof}

The rest of our hardness proofs reduce from a problem from which a straightforward reduction also introduces many ties. However, to avoid concerns about whether the hardness is linked to tie-breaking, we also provide proofs that ensure that tie-breaking is never invoked.

\subsection{Method of Equal Shares}

The Method of Equal Shares (MES) is a recently introduced rule~\citep{pet-sko:c:method-equal-shares} that allots voters budgets and allows them to elect candidates using shares of their budget.
A generalized version of this rule~\citep{pet-pie-sko:c:mes-partcipatory-budgeting}, has also been used in the real world for participatory budgeting (a setting that generalizes multiwinner elections to voting over issues to be funded) (see~\citet{equal-shares:m:equal-shares-elections}).

Since the definition of MES is quite involved, we first provide an informal description, followed by an example, and finally, the formal definition.

In an MES election, voters are given a budget that they can use to elect candidates they approve of, provided they can afford it. 
The election proceeds in rounds, wherein a candidate is added to the winning committee by the end of each round. Adding a candidate $c$ to the winning committee incurs a cost of 1, which is split among all voters who approve of $c$. The candidate chosen is the one that minimizes the maximum cost incurred by each voter approving them. Once a candidate is elected, each voter's budget is adjusted according to the cost they paid to elect said candidate.
The process continues until $k$ candidates have been elected, or until voters no longer possess the budget to elect a new candidate.

\begin{example}
Suppose we have candidates $\{a,b,c,d\}$
and $n = 4$ voters, labeled $1$ to $4$, who want to elect a committee of size $k = 3$. Each voter starts with an initial budget of $k/n = 3/4$. The votes and budgets of each voter are:
\[
\begin{array}{ccc|ccc}
1: & \{a,c\} & \nicefrac{3}{4} & 2: & \{a,c\} & \nicefrac{3}{4} \\ 
3: & \{a,b\} & \nicefrac{3}{4} & 4: & \{b,d\} & \nicefrac{3}{4} 
\end{array}
\]
Candidates require a cost of 1 to be elected. 
Candidate $a$ can be elected if all of its voters pay $1/3$.
Candidate $b$ can be elected if all of its voters pay $1/2$.
Candidate $c$ can be elected if all of its voters pay $1/2$.
Candidate $d$ can be elected if voter 4 pays 1, but this is not possible since the voter does not have the budget.
MES looks to minimize the maximum cost incurred by each voter; so in this case, it will pick candidate $a$ for the winning committee, since said cost is $1/3$. The cost incurred by each voter is subtracted and we are left with the following budgets:
\[
\begin{array}{ccc|ccc}
1: & \{a,c\} & \nicefrac{5}{12} & 2: & \{a,c\} & \nicefrac{5}{12} \\ 
3: & \{a,b\} & \nicefrac{5}{12} & 4: & \{b,d\} & \nicefrac{3}{4} 
\end{array}
\]
Candidate $c$ can no longer be elected since its voters, 1 and 2, have a total budget of $10/12 < 1$.
Candidate $b$'s voters have a total budget of $7/6$, so $b$ can be elected. The maximum budget paid by each of $b$'s voters is minimized if voter 4 pays $7/12$ and voter 3 pays $5/12$.
Thus, MES elects $b$, and the updated budgets are:
\[
\begin{array}{ccc|ccc}
1: & \{a,c\} & \nicefrac{5}{12} & 2: & \{a,c\} & \nicefrac{5}{12} \\ 
3: & \{a,b\} & 0 & 4: & \{b,d\} & \nicefrac{1}{6} 
\end{array}
\]
MES terminates here and returns the committee $\{a,b\}$ as the cost to elect candidates $c$ or $d$ cannot be met by its voters. Note that MES returns fewer candidates than the desired committee size. This is discussed further below.
\end{example}

We now formally define the rule, heavily based on the definition from \citet{lac-sko:b:abc-rules}.

\begin{definition}[Method of Equal Shares (MES)]
\label{def:mes}
    Suppose we are given candidates $C$, collection of votes $V$, and committee size $k$. We will assume that there are $n$ votes and each voter is identified with an integer in $[1,n]$.
    
    Let $x_r(i)$ be the budget of voter $i$ at the end of round $r$; $x_0(i) = k/n$.
    Suppose we are electing a candidate in round $r+1$.
    Let $C_r$ be the candidates that can be elected; candidates that have not already been selected for the winning committee and whose voters can afford to elect them, i.e., $\sum_{i \in N(c)} x_r(i) \geq 1$, where $N(c)$ is the set of voters that approve of $c$.
    We select a new candidate like so.
    For each candidate $c \in C_r$, we compute
    $\rho_c$: the minimum value 
    such that each voter approving $c$ pays at most $\rho_c$ and all voters pay a total of 1, i.e., we compute the minimum value $\rho_c$ satisfying:
    \[ \sum_{i \in N(c)} \min(x_r(i), \rho_c) = 1 \]
    The candidate with minimum $\rho_c$ is selected for the winning committee---ties are broken lexicographically.
    The budgets for each voter are updated as follows:
    \[
    x_{r+1}(i) = \begin{cases}
        x_{r}(i) - \rho_c & \text{if $i$ approves $c$ and $x_{r}(i) \geq \rho_c$} \\
        0 & \text{if $i$ approves $c$ and $x_{r}(i) < \rho_c$} \\
        x_r(i) & \text{otherwise}
    \end{cases}
    \]
\end{definition}

Note that voters may run out of the budget necessary to elect another candidate before $k$ rounds. In this case, the method terminates early and returns the winning committee constructed so far.
However, technically this does not meet our definition of an ABC rule, since we must return a committee of size $k$.
To rectify this, 
another ABC rule is called to select the remaining candidates---this is referred to as the second phase of MES.
\citet{lac-sko:b:abc-rules} suggest using seq-Phragmén (see Sections~\ref{sec:phrag} and~\ref{sec:mesphrag} of the appendix) since it preserves some desirable properties.

We will show that computing the committee found by (the first phase of) MES is \p-hard.
This shows that the hardness of MES is not caused by the hardness of the rule called afterward (such as seq-Phragmén).
As in the previous section, we focus on the decision problem where we ask about the inclusion of a specified candidate. We reduce from a problem known as LFMIS:

\begin{problem}[Lexicographically First Maximal Independent Set (LFMIS)~\citep{coo:j:tax}]
Suppose we are given an undirected graph $G$ and a vertex $v$.
The LFMIS of $G$ is a maximal independent set built by repeatedly picking the lexicographically first vertex not adjacent to an already picked vertex until no more vertices can be added.
Decide if $v$ belongs to the LFMIS of $G$.
\end{problem}

For reasons that will be made clear in the upcoming proof, working with regular graphs (graphs where each vertex has the same degree) allows for a much easier hardness reduction.
\cite{m:j:lfms} showed that LFMIS is \p-complete even when the input graph is subcubic, i.e., each vertex has degree at most three. In Section~\ref{sec:lfmis} of the appendix, we show how this result can be easily extended to show that LFMIS is \p-complete even when the input graph is 3-regular. Thus, for our proof below, we will assume that is the case.

\begin{figure}[t]
    \centering
    \includegraphics[width=0.5\linewidth]{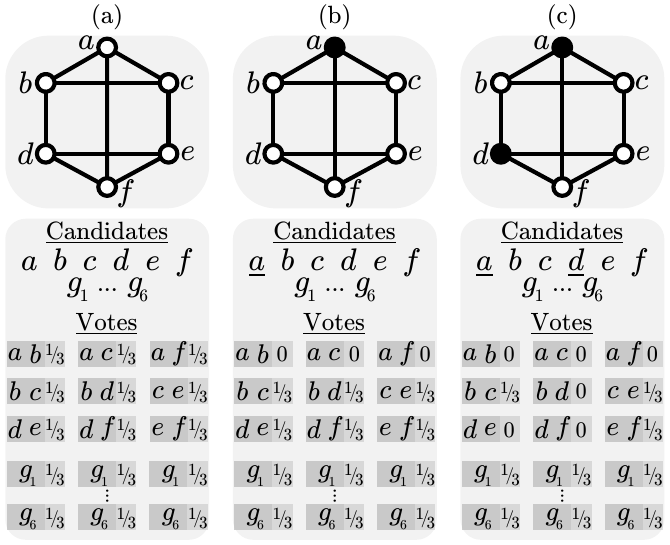}
    \caption{Column (a) shows the graph input to LFMIS at the top and the corresponding MES election at the bottom with committee size $k = 9$.
    Note the extra candidates $g_1$ to $g_6$ whose three votes each give us an initial budget of $9/27 = 1/3$. 
    The budgets of each voter are shown next to their vote.
    Observe that each candidate has exactly three voters that can pay $1/3$ for them, implying that the minimum budget $\rho_c = 1/3$ is the same for every candidate. 
    Column (b) shows the next round of both problems, where $a$ is the first vertex to be added to the LFMIS and, 
    due to lexicographic tie-breaking, candidate $a$ is the first candidate to be elected.
    Note that selected vertices are filled and selected candidates are underlined.
    The next round is depicted in column (c), where $d$ is chosen for the LFMIS since it is the first vertex not adjacent to $a$, and $d$ is elected in MES since they're the first candidate whose voters can afford to pay for them. 
    Just as vertices that share an edge with $a$ in the input graph can not be selected for the LFMIS, 
    note how the candidates that share votes with $a$ can not be elected by MES since their voters can no longer afford them.
    In the remaining rounds of the algorithm, the extra candidates $g_1$ to $g_6$ are elected by MES before it terminates. }
    \label{fig:lfmis-mes}
\end{figure}

\begin{theorem}
\label{thm:mes-main}
Computing the winning committee for (the first phase of) MES is inherently sequential.
\end{theorem}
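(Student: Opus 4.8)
The plan is to give a logspace reduction from LFMIS on a 3-regular graph $G$ with distinguished vertex $v$ to the decision version of (first-phase) MES, mirroring the structure of the seq-CC reduction from OVR but accounting for the budget mechanics. As in Figure~\ref{fig:lfmis-mes}, I would create one candidate per vertex of $G$ and one vote per edge (so that edge $\{a,b\}$ becomes a vote approving $a$ and $b$). The central design goal is to force MES to elect candidates in exactly the order in which LFMIS picks vertices: greedily, lexicographically, and with a selected candidate ``blocking'' its graph-neighbors. The key observation making 3-regularity valuable is that if we can arrange for every not-yet-blocked candidate to have exactly the same $\rho_c$ at every round, then MES's tie-breaking reduces to pure lexicographic choice—precisely LFMIS's rule. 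Because $G$ is 3-regular, each candidate is approved by exactly three edge-votes, so if each relevant voter can contribute the same share, every candidate's $\rho_c$ equals $1/3$ uniformly.

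The main technical step is choosing the committee size $k'$ and the voter budgets so that $\rho_c = 1/3$ holds for all eligible candidates throughout the execution, and so that selecting a candidate exactly consumes the budgets of its three edge-votes, thereby disabling its neighbors just as LFMIS's independence constraint does. Concretely, I would set things up so that each edge-vote starts with budget allowing it to pay $1/3$ toward a candidate. When candidate $a$ is elected, each of its three incident edge-votes pays $1/3$ and is left unable to help fund any neighbor of $a$; this is the MES analogue of removing $N[a]$ from consideration. I would introduce auxiliary ``garbage'' candidates $g_1, \ldots, g_\ell$ (as in the figure) together with dedicated votes, whose sole purpose is to set the initial budget $x_0(i) = k'/n$ to exactly $1/3$ and to absorb the remaining committee slots after the ``real'' LFMIS simulation completes, so that the first phase runs for the intended number of rounds without prematurely terminating. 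I would then set $c = v$ and argue that $v$ is elected in the first phase if and only if $v \in \mathrm{LFMIS}(G)$.

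The correctness argument proceeds by induction on the rounds: assuming the set of candidates already elected from the vertex-gadget equals the set of LFMIS vertices picked so far, I show that the eligible candidates in $C_r$ are exactly the vertices not yet dominated, that each has $\rho_c = 1/3$, and hence MES picks the lexicographically least such candidate—matching LFMIS. After a vertex-candidate is elected, the budget-update rule zeroes out the contribution of its incident edge-votes, which is exactly what renders its neighbors unaffordable in subsequent rounds. I would also verify that the garbage candidates are only elected after the vertex-simulation stabilizes (e.g., by making their $\rho$ slightly worse, or by lexicographic ordering), so they never interfere with the ordering of the real candidates. Finally, I note that the reduction is computable in logspace, since it consists of copying $G$ into candidates/votes, adding a fixed pattern of garbage gadgets, and computing $k'$ and $n$ arithmetically.

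The hard part will be getting the budget arithmetic to line up exactly so that $\rho_c$ stays uniformly $1/3$ and a single election step cleanly ``uses up'' precisely the neighbor-blocking budget—neither too much (which would disable candidates that should remain eligible) nor too little (which would let a dominated candidate still be affordable). The 3-regularity assumption is what makes this balance achievable with a single clean fraction, which is exactly why the earlier remark in the excerpt emphasizes reducing from the 3-regular version of LFMIS. A secondary subtlety, flagged by the paper's recurring concern, is that the naive construction introduces ties; I expect a follow-up argument (analogous to the padding-votes trick used for seq-CC) that perturbs the instance so that tie-breaking is never invoked, showing the hardness is intrinsic rather than an artifact of the lexicographic rule.
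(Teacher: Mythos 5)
Your proposal follows essentially the same route as the paper's proof: the paper also reduces from LFMIS on 3-regular graphs with one candidate per vertex and one vote per edge, adds $m$ extra candidates each with three dedicated singleton votes so that $k = m/2 + m$ and $n = 3m/2 + 3m$ yield initial budget exactly $1/3$ and uniform $\rho_c = 1/3$, and lets lexicographic tie-breaking simulate LFMIS while elected candidates zero out their three voters' budgets to block neighbors. The exact budget arithmetic you flag as the remaining hard part is realized by precisely this choice of parameters, and the tie-breaking concern is, as you anticipate, handled by a separate padded construction (Theorem~\ref{mes:noties}).
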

\begin{proof}
    Suppose we are given an instance $(G, v)$ of LFMIS, where $G$ is a 3-regular graph with $m$ vertices labeled $1$ to $m$. We will construct an election with candidates $C$, collection of votes $V$, and committee size $k$, and specify a candidate $c \in C$ such that the winning committee selected by MES includes $c$ if and only if $v$ is included in the LFMIS of $G$.

    Similar to the proof of Theorem~\ref{thm:seqcc},
    let $C$ have a candidate for each vertex of $G$, and 
    $V$ have a vote for each edge such that edge $\{a,b\} \in E(G)$ corresponds to a vote for candidates $a$ and $b$.
    We also add $m$ extra candidates to $C$, labeled $m+1$ to $2m$, with three singleton votes in $V$ for each. 
    Observe that the lexicographic order of the input graph is preserved in our election.
    Note that we have a total of $n = 3m/2 + 3m$ votes.
    Finally, we set $c = v$ and $k = m/2 + m$.

    We now show that the candidates corresponding to $V(G)$ picked by MES correspond exactly to the vertices in the LFMIS of $G$.
    This is illustrated with an example in Figure~\ref{fig:lfmis-mes}.
    
    Observe that the initial budget is
    \[x_0(i) = \frac{k}{n} = \frac{m/2 + m}{3m/2 + 3m} = \frac{1}{3}\]
    for each voter $i$.
    Also, recall that each candidate has exactly three votes. At the start of round 1, any candidate $c \in C$ can be elected, since its three supporters can each pay $1/3$ to elect it. $1/3$ is also the minimum amount, $p_c$, that can be paid by each voter to achieve this.
    Due to lexicographic tie-breaking, we will select the candidate labeled 1.
    When updating our budget for the next round, observe that any voter $i$ approving candidate 1 now has a remaining budget of 0; $x_1(i) = x_0(i) - \rho_c = 1/3 - 1/3 = 0$.

    Due to this decreased budget, in the next round, any candidate that shared a vote with candidate 1 cannot be elected since 
    its voters can no longer afford to pay for them. 
    This corresponds exactly to LFMIS, wherein any vertex adjacent to an already selected vertex cannot be added to the independent set.
    Thus, in subsequent rounds, a candidate who does not share a vote with a previously elected candidate must be elected.
    It is now easy to see that the candidates corresponding to vertices in $G$
    picked by MES correspond exactly to the vertices in the LFMIS of $G$.

    Note that since $k$ is large enough (more than the number of vertices of $G$), all of the vertices in the LFMIS will be picked. The extra candidates, $m+1$ to $2m$, will be picked after LFMIS candidates due to lexicographic ordering. Note that since each extra candidate also has exactly three voters, the minimum budget to elect them, $\rho_c$, is also $1/3$.
    It should now be clear that these extra candidates were added so that our initial budget is $1/3$.

    It is clear that $v \in V(G)$ is elected by MES if and only if $v$ is part of $G$'s LFMIS, and that the reduction is clearly in logspace, thus completing our proof.
    \end{proof}

We note that the above reduction may return a winning committee with fewer than $k$ candidates. Recall that in this case another ABC rule is called to elect the remaining candidates. For the sake of completeness, 
we show in Section~\ref{sec:mesphrag} of
our appendix that MES is also 
P-hard when seq-Phragmén is used, as suggested by~\citet{lac-sko:b:abc-rules}, to finish electing a total of $k$ candidates.
Particularly, we reduce from LFMIS and show that the specified vertex $v$ is picked by MES if and only if $v$ belongs to the LFMIS.

As in the case of seq-CC we can show that (the first phase of) MES is \p-hard for the case where tie-breaking is never invoked. Informally, we pad our reduction from LFMIS with enough votes so that none of the candidates are tied initially, and through the execution of MES at no point will candidates become tied.
See Section~\ref{sec:mes-noties} of the appendix for details.

\begin{theorem}\label{mes:noties}
Computing the winning committee for (the first phase of) MES is inherently sequential even when tie-breaking is never invoked.
\end{theorem}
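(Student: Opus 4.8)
The plan is to reduce from LFMIS on $3$-regular graphs, exactly as in the proof of Theorem~\ref{thm:mes-main}, but to \emph{pad} the constructed election with additional votes so that the values $\rho_c$ are pairwise distinct at every round, which forces MES to make each of its choices without ever consulting the lexicographic tie-breaking rule. The source of the ties in Theorem~\ref{thm:mes-main} is that every candidate has exactly three supporters of equal budget, so initially $\rho_c = 1/3$ for all $c$; since with equal budgets one has $\rho_c = 1/|N(c)|$, the only way to separate these values is to give the candidates distinct numbers of supporters. I would therefore (i) replicate each edge-vote $\{a,b\}$ a large number $t$ of times, (ii) give each graph candidate $i \in \{1,\dots,m\}$ a distinct, index-decreasing number $q_i$ of private singleton votes $\{i\}$ (so that the lexicographically first vertex has the most supporters and hence the smallest $\rho$), assign similarly distinct private-vote counts to the filler candidates $m+1,\dots,2m$, and (iii) add enough further filler so that the global initial budget $x_0(i) = k/n$ lands in the window described below.

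First I would check the base case: with equal initial budgets, candidate $i$ has $3t + q_i$ supporters, so $\rho_i = 1/(3t + q_i)$, and these values are pairwise distinct and strictly increasing in $i$ among the graph candidates. Hence in round $1$ MES selects candidate $1$, the lexicographically first vertex, with no tie. Next I would verify that the elimination mechanism of Theorem~\ref{thm:mes-main} survives the padding: when a neighbor $j$ of $i$ is elected, the $t$ copies of the shared edge-vote each lose $\rho_j$, removing an amount $t\rho_j \approx 1/3$ from $i$'s total, and I would choose the parameters so that this drops $i$'s total budget strictly below $1$, rendering $i$ unelectable exactly as an LFMIS neighbor is forbidden. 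Combined with the fact that a candidate whose neighbors are all unelected keeps its full supporter set and hence its original $\rho$, an induction on rounds shows that MES selects the graph candidates in precisely the LFMIS order, so $v$ is elected if and only if $v$ belongs to the LFMIS; the filler candidates are then elected (in a tie-free order fixed by their distinct private-vote counts) after all graph candidates, as before. The reduction is clearly logspace.

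The main obstacle is the simultaneous satisfaction of two opposing inequalities for \emph{every} candidate under a single global budget $B = k/n$: electability requires $(3t + q_i)B \ge 1$ whenever $i$ has no elected neighbor, while eliminability requires $(3t + q_i)B - t\rho_j < 1$ once a neighbor $j$ is elected. Because the distinct counts $q_i$ shift each candidate's total budget by up to $mB$, keeping all of these totals inside the resulting narrow window forces $t$ to be taken large relative to $m$, so that the $\approx 1/3$ budget loss from one depleted edge dominates the $O(mB)$ spread; pinning down an explicit admissible choice of $t$, the $q_i$, and $k/n$ is the one genuinely delicate computation. A second, more subtle point is that no ties may re-emerge at \emph{later} rounds: as budgets deplete non-uniformly the $\rho$-values can drift, so I would argue that the private votes, which are untouched until their own candidate is elected, keep every pair of still-electable candidates' $\rho$-values separated throughout the execution, and I would separately confirm tie-freeness during the filler phase.
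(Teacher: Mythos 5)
Your proposal is essentially the paper's own proof: the paper instantiates your padding scheme with $t = m^2$ copies of each edge vote, $q_a = m-a$ private singleton votes for graph candidate $a$, a single budget-tuning filler candidate $0$ with $m^3$ singleton votes (elected first, rather than your $2m$ end-of-run fillers), and $k = m+1$ with $m \geq 12$, and its Claims~\ref{clm:mes-nt-1} and~\ref{clm:mes-nt-2} are exactly the electability/eliminability inequalities you flag as the delicate computation, confirming your window exists. Your concern about ties re-emerging in later rounds dissolves under the paper's argument: once a candidate shares a vote with an elected neighbor it becomes outright unaffordable (excluded from $C_r$) rather than merely having a larger $\rho$, so every still-electable candidate has all supporters at full initial budget and the pairwise-distinct supporter counts keep the $\rho$-values distinct throughout the execution.
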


\section{Domain Restrictions}
\label{sec:domain}

We now turn our attention to domain restrictions. Such restrictions often result
in axiomatic and computational advantages (see, e.g.,~\citet{elk-lac-pet:b:structured}).
The most commonly studied domain restrictions are single-peaked preferences~\citep{bla:j:rationale-of-group-decision-making} and single-crossing preferences~\citep{mir:j:single-crossing}. 
Each of these restrictions capture
natural settings where the electorates are focused on a single issue.
In single-peakedness, this results in an ordering of the candidates from one extreme to the other, while in a single-crossing electorate the voters are ordered with respect to this issue and voters on either end represent the extremes. %
These restrictions were each first defined for preference order ballots, but have natural definitions for approval votes.
\citet{fal-hem-hem-rot:j:single-peaked-preferences}
introduced
the model of single-peakedness for approval ballots, and
\citet{elk-lac:c:dichotomous} introduced the
model of single-crossingness for approval ballots.

\begin{figure}[t]
    \centering
    \includegraphics[width=0.5\linewidth]{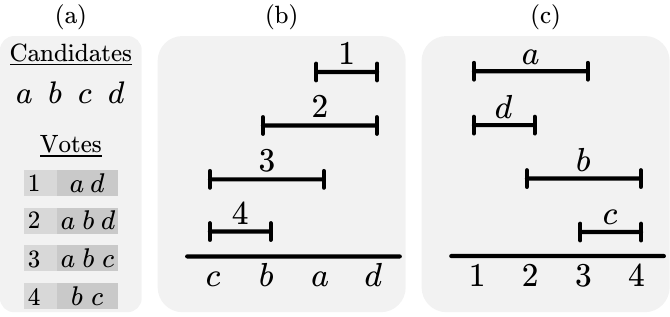}
    \caption{Column (a) shows an election. Column (b) shows how the votes are single-peaked with respect to the axis $c\ b\ a\ d$. Column (c) shows how the votes are single-crossing with respect to the axis $1\ 2\ 3\ 4$. The intervals (candidates) in Column (c) are ordered in nondecreasing order with respect to starting point, as will be done in the construction of Theorem \ref{sccc}. } 
    \label{fig:spsc-example}
\end{figure}

A collection of (approval) votes is single-peaked if there exists a linear order of the candidates (a single-peaked axis) such that each vote forms a contiguous interval on the axis~\citep{fal-hem-hem-rot:j:single-peaked-preferences} (this is also known as candidate interval (CI)~\citep{elk-lac:c:dichotomous}).
A collection of (approval) votes is single-crossing if there exists a linear order of the voters (a single-crossing axis) such that for each candidate the set of voters that approve that candidate forms a contiguous interval on the axis (this is also known as voter interval (VI))~\citep{elk-lac:c:dichotomous}. See Figure~\ref{fig:spsc-example} for an example of an
election whose votes are both single-peaked and single-crossing.

As pointed out by \citet{fal-hem-hem-rot:j:single-peaked-preferences} (for single-peaked) and \citet{elk-lac:c:dichotomous} (for single-crossing), the problem of determining whether a collection of votes (in our setting of approval ballots) is single-peaked or single-crossing, and computing an axis witnessing this is in essence the 
problem of determining whether a 0-1 matrix has the consecutive ones property and computing a permutation that witnesses that. These problems have long been known to be computable in polynomial time~\citep{boo-lue:j:pqtrees,ful-gro:j:incidence-matrices}.

Algorithms for these domain restrictions typically start with computing an axis.
Since we are interested in complexity classes below \p, we need to look at the complexity of computing an axis in more detail.
Careful inspection of the literature shows that computing whether a 0-1 matrix has the consecutive ones property and computing a permutation that witnesses that is in fact computable in logarithmic space~\citep{DBLP:journals/siamcomp/KoblerKLV11}.
This immediately gives us the novel observation that computing whether a collection of votes (in our setting of approval ballots) is single-peaked or single-crossing and computing an axis witnessing this is computable in logarithmic space. We note that this claim also holds in the common setting where the votes are preference orders. This follows from observing that the reductions to the consecutive ones problem~\citep{bar-tri:j:stable-matching-from-psychological-model,bre-che-woe:j:char-single-crossing} are easily seen to be computable in logarithmic space.

\begin{observation}
Computing whether a collection of votes is single-peaked or single-crossing and computing an axis witnessing this is computable in logarithmic space. This holds in our setting of approval ballots as well as in the setting of preference order ballots.
\end{observation}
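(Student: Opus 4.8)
The plan is to reduce each of the recognition tasks, in logarithmic space, to the single problem of deciding whether a 0-1 matrix has the consecutive ones property (C1P) while computing a witnessing column permutation, and then to invoke the logspace algorithm for that problem due to \citet{DBLP:journals/siamcomp/KoblerKLV11}. The glue is the standard fact that \fl\ is closed under composition: if one logspace transducer produces a polynomial-size matrix that is then fed to a second logspace machine, the composite never needs to store the intermediate matrix, since any one of its entries can be recomputed on demand using an auxiliary logspace position counter. It therefore suffices to exhibit, in each of the four cases (approval versus preference-order ballots, single-peaked versus single-crossing), a map from the input profile to a 0-1 matrix that is computable entrywise in logspace and whose C1P-witness is exactly the desired axis.

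For approval ballots the map is immediate and needs no appeal to the literature. Single-peakedness is candidate interval, so taking the approval incidence matrix with one row per vote and one column per candidate (a $1$ marking approval), the votes are single-peaked exactly when this matrix has the C1P, and the witnessing column permutation is the single-peaked axis. Single-crossingness is voter interval, which is the transposed statement: placing candidates on the rows and voters on the columns yields a matrix whose C1P-witness is the single-crossing ordering of the voters. In both subcases the matrix is literally the approval matrix or its transpose, so each entry is read directly from the input and the map lies in \fl.

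For preference-order ballots I would invoke the known reductions to C1P. For single-peakedness the reduction of \citet{bar-tri:j:stable-matching-from-psychological-model} builds a matrix whose rows record the upper contour sets of the voters; the profile is single-peaked precisely when all of these sets are simultaneously intervals on a common candidate axis, i.e.\ when the matrix has the C1P, and the axis is the C1P-witness. For single-crossingness the reduction of \citet{bre-che-woe:j:char-single-crossing} transforms the profile into a 0-1 matrix, indexed on one side by the voters, whose C1P-witness encodes the single-crossing voter order. In each case the matrix has only polynomially many rows and columns, and any individual entry is determined by the relative order of a bounded number of candidates within a single voter's preference list, which is a logspace computation.

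The step I expect to be the real work is not conceptual but a verification: the two literature constructions were originally stated as polynomial-time reductions, so I must confirm that they are in fact computable entrywise in logspace, and that the C1P algorithm returns its witness in the encoding we want (a permutation of candidates in the single-peaked case, of voters in the single-crossing case). Once entrywise logspace-computability is checked and the composition lemma is applied, membership in \fl\ follows uniformly for all four cases, establishing the claim.
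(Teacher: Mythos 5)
Your proposal is correct and takes essentially the same route as the paper: both reduce the four recognition tasks to the consecutive ones property (the approval incidence matrix and its transpose for approval ballots, and the reductions of \citet{bar-tri:j:stable-matching-from-psychological-model} and \citet{bre-che-woe:j:char-single-crossing} for preference orders, checked to be logspace-computable) and then invoke the logspace C1P-with-witness algorithm of \citet{DBLP:journals/siamcomp/KoblerKLV11}. The only difference is that you make explicit the closure of logspace functions under composition and the entrywise recomputation argument, details the paper leaves implicit.
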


As mentioned previously, domain restrictions can lower complexity. This is also the case in our context of approval ballots. For example, finding a winning committee for Chamberlain-Courant is \np-hard~\citep[Theorem 1]{pro-ros-zoh:j:proportional-representation}, 
but for single-peaked votes and for single-crossing votes, 
this problem is computable in polynomial time~\citep{bet-sli-uhl:j:proportional-representation,elk-lac:c:dichotomous}.

We use SP-CC (resp., SC-CC) to refer to the problems of computing a CC committee when the votes are single-peaked (resp., single-crossing).
In contrast to the results of the previous section, we will show that these problems are even parallelizable.

\begin{theorem}
\label{t:spsccc}
SP-CC and SC-CC are parallelizable.
\end{theorem}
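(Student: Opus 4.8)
The plan is to show that both SP-CC and SC-CC lie in \optl, which by the paper's framework (and the stated fact that \optl $\subseteq$ \fnc) immediately yields parallelizability. By Definition~\ref{def:optl} (the \optl\ definition referenced in the paper), it suffices to exhibit, for each problem, a polynomially-bounded \nl-transducer (or an \np-machine with logspace-checkable branches) whose output values range over the quantity we wish to optimize, so that the optimal CC score---and a witnessing committee---can be extracted. First I would recall Observation~2, which tells us it is enough to decide, in the appropriate class, whether a fixed candidate $c$ belongs to an optimal committee; this lets us phrase everything as an optimization/decision problem rather than explicitly constructing the output committee in parallel.

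The key structural input is the dynamic-programming characterization of CC on structured domains. For SP-CC, I would first compute a single-peaked axis in logarithmic space using the earlier Observation on the consecutive-ones property. Once the candidates are linearly ordered $c_1, \ldots, c_m$, each vote is a contiguous interval, and the standard interval-covering dynamic program for CC applies: an optimal committee can be described by a choice of $k$ candidate positions, and the number of voters hit is additive over the gaps between chosen positions (a voter is hit iff its interval contains at least one chosen candidate). The crucial point for \optl\ membership is that a candidate subset witnessing a given CC-score corresponds to a guessed object of polynomial size whose score is computable in logarithmic space (counting hit votes is a simple counting/comparison task), so an \optl-machine can nondeterministically guess a size-$k$ committee, compute its hit-count along its accepting path, and take the maximum over all paths. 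For SC-CC the argument is symmetric after computing a single-crossing (voter-interval) axis: now voters are ordered so that each candidate's supporter-set is contiguous, and the analogous contiguity structure (as foreshadowed by the remark that intervals are ordered by starting point for Theorem~\ref{sccc}) drives the same guess-and-verify \optl\ computation.

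The main obstacle I expect is verifying that the \emph{verification} step genuinely sits in logarithmic space (so the whole thing is in \optl\ rather than merely in \fp). Guessing a committee of size $k$ is cheap, but evaluating its CC-score requires counting how many votes are hit, and doing this count in logspace---while simultaneously streaming over votes represented on the single-peaked or single-crossing axis---needs care: one must argue that, given the logspace-computable axis, membership of a voter's interval-intersection with the chosen committee is decidable by a logspace predicate, and that the counting of satisfied voters is a logspace-computable function of the guess. I would handle this by exploiting contiguity: on the axis, ``committee $W$ hits vote $v$'' reduces to checking whether the interval for $v$ contains any chosen position, which is a range-membership test doable with a logarithmic-space pointer scan. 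The second subtlety is extracting a \emph{witnessing} committee (not just the optimal value) within \optl; here I would rely on the \optl\ machinery that returns a maximizing path's output, ensuring that the guessed committee attaining the optimum is recovered, and then invoke Observation~2 to convert the optimization output into the per-candidate membership decision that certifies parallelizability.
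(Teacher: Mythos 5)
Your high-level strategy is exactly the paper's: compute the axis in logspace (via the consecutive-ones observation) and then show membership in \optl, which yields parallelizability since \optl\ is contained in the parallelizable classes. However, your central verification step has a genuine gap. You propose to ``nondeterministically guess a size-$k$ committee, compute its hit-count along its accepting path, and take the maximum over all paths.'' An \nl\ transducer cannot do this: $k$ can be polynomial in the input size, so the guessed committee cannot be stored on the logspace work tape, and a nondeterministic machine cannot re-read its own past guesses. In particular, your proposed fix---streaming over the voters and testing each voter's interval against ``any chosen position'' by a pointer scan---presupposes access to the full guessed committee, which is precisely what logspace forbids; nor can you re-guess the committee per voter, since distinct branches would guess inconsistent committees. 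The paper confronts this head-on (``we can't simply guess $k$ intervals \ldots since we do not have the space to store $k$ intervals'') and resolves it by guessing the committee members \emph{in increasing order along the axis}, maintaining only $O(1)$ logspace variables ($lastc$, $nextc$, counters). Contiguity is what makes this work: for SP-CC, a voter interval $[s_i,t_i]$ is \emph{newly} hit by $nextc$ iff $nextc \in [s_i,t_i]$ and $s_i > lastc$, so the incremental count depends only on the last two choices and each hit voter is counted exactly once; for SC-CC one additionally tracks the last covered voter $lastv$ and adds $\left\| [lastv+1,n] \cap [s_{nextc},t_{nextc}] \right\|$. Your proposal invokes contiguity only for the (unimplementable) membership test, not for this monotone incremental decomposition, which is the actual key idea.

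A second, smaller gap: you appeal generically to ``the \optl\ machinery that returns a maximizing path's output'' to recover a witnessing committee, but \optl\ returns the \emph{maximum output value} over accepting paths, and nothing guarantees a priori that the path with maximum output encodes an optimal committee. The paper engineers this explicitly: each path first emits a padding block $1^{optv \cdot m(m+1)}0$ (so the guessed score dominates the comparison, the committee encoding being bounded in length by $m(m+1)$), then appends the chosen candidates, and accepts only if the incremental count equals the guessed $optv$. Without such an encoding-and-verify discipline, your extraction step does not go through. Your detour through the decision problem (per-candidate membership) is also unnecessary once the transducer outputs the committee directly, though that part is harmless.
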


We will show that SP-CC and SC-CC are in some sense computable in nondeterministic logarithmic space, in a way that will imply that these problems are parallelizable.
When looking at deterministic complexity classes, it is a little sloppy
but not dangerous to talk about function problems (such as computing the winning coalition) as being in a complexity class like P (really a class of decision problems). However, we need to be very careful when talking about nondeterministic function classes, since such functions are inherently multi-valued, while we are interested in single-valued functions. 
There are multiple ways to get from multi-valued to single-valued functions. We will use the simplest way for our purposes: the function class \optl~\citep{alv-jen:j:logcount}.

\begin{figure}[t]
\centering
\includegraphics[width=0.5\linewidth]{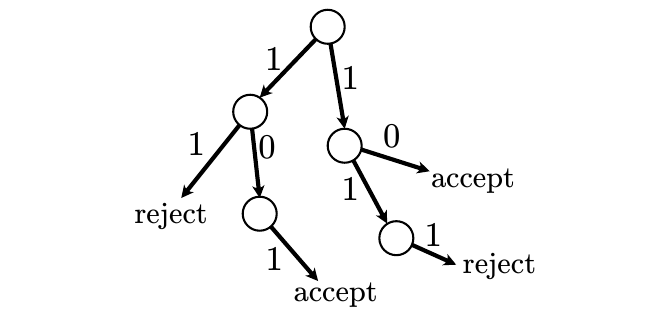}
    \caption{This figure depicts the computation tree of a transducer on some input. The output values of the accepting paths of this transducer are 101 and 10. The value of the \optl\ function defined by this transducer given this input is the maximum of these values, i.e., 101.}
    \label{fig:optl-example}
\end{figure}

\begin{definition}[\optl]\label{def:optl}
\optl\ is the class of functions computable by taking the maximum of the output values over all accepting paths of an \nl\ transducer, i.e., a nondeterministic Turing machine with a read-only input tape, a logspace work tape, and a write-only output tape.
\end{definition}

We provide a visualization in Figure~\ref{fig:optl-example}.
It is known that every \optl\ function is parallelizable~\cite[Theorem 4.2]{alv-jen:j:logcount}. 
We will show that SP-CC and SC-CC are in \optl\ (Theorems~\ref{spcc} and \ref{sccc}).
Since \optl\ is a subset of the class of parallelizable problems, this immediately implies
Theorem~\ref{t:spsccc}.
As an aside, a benefit of this approach is that we can show parallelizability without needing to directly
describe a parallel algorithm, which can be quite involved (e.g., we would first need to decide on a model, describe how each machine aggregates information, we would need to show polylogarithmic running time, etc).

Since we can compute an SC or SP axis in logarithmic space, we will see that our problems are in essence interval problems, with SP-CC corresponding to computing a set of $k$ points (candidates) that hit the most intervals (voters) and SC-CC corresponding to computing a set of $k$ intervals (with each interval being the set of voters that approve a particular candidate) whose union (the set of voters that approve at least one of those candidates) is as large as possible.

\begin{theorem}
\label{spcc}
    SP-CC is in \optl.
\end{theorem}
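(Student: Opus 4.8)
The plan is to first use the Observation that a single-peaked axis is computable in logarithmic space to turn SP-CC into a pure interval problem, and then to exhibit a single \nl\ transducer whose maximum output value over accepting paths encodes an optimal committee. After computing the axis, I would identify each candidate with its axis position $1,\dots,m$ and each voter $j$ with the interval $[\ell_j,r_j]$ of axis positions it approves (its leftmost and rightmost approved candidate, both logspace-computable by scanning voter $j$ together with the axis permutation). A committee is then a set of $k$ points, and the CC objective is exactly to maximize the number of intervals containing at least one chosen point, as already described in the excerpt.

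The key structural fact that makes this \nl-computable is that coverage is additive along a left-to-right sweep once we attribute each hit voter to the \emph{leftmost} chosen point inside its interval. Concretely, if $p$ is a chosen point and $p'$ is the previous chosen point (with $p'=0$ for the first point), the voters newly attributed to $p$ are exactly those $j$ with $p' < \ell_j \le p \le r_j$; summing this count over all chosen points counts each hit voter exactly once. This count depends only on $p'$, $p$, and the input, so it is evaluated by scanning the votes with logspace counters, and a full sweep need only remember the previous chosen point, a running coverage total, and a count $t$ of chosen points.

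I would therefore build the transducer as follows. First, nondeterministically guess the optimal coverage value $s$ (at most $n$, hence $O(\log n)$ bits) and write it to the output tape in a fixed width. Then sweep axis positions $1,\dots,m$ left to right; at each position nondeterministically decide whether it joins the committee, immediately write the corresponding characteristic bit, and whenever a position is selected, increment $t$ and add the attributed count $\#\{j : p' < \ell_j \le p \le r_j\}$ (computed from the stored previous selected position by a scan of the votes) to the running total. Upon reaching position $m$, accept iff $t=k$ and the running total equals the guessed $s$, and otherwise reject. Every accepting path thus outputs a fixed-width string consisting of $s$ followed by a committee whose true coverage equals $s$, so the \optl\ maximum over accepting paths (numerically, equivalently lexicographically since all outputs share one length) first maximizes coverage and then selects a canonical optimal committee, which is a valid CC committee. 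This places SP-CC in \optl, and hence (by the cited result that \optl\ functions are parallelizable) contributes to Theorem~\ref{t:spsccc}. Monotonicity of coverage together with $k\le m$ justifies guessing exactly $k$ points.

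The main obstacle is the ordering constraint imposed by the write-only output tape: the quantity the maximization must key on (coverage) is most significant and has to be written first, yet it is only fully known once the entire committee is fixed. The guess-and-verify device above is precisely what resolves this — we commit to $s$ up front and silently discard any path whose swept coverage disagrees with $s$ — and its correctness rests on the additivity of the leftmost-hit attribution, which is the step I would argue most carefully. The remaining ingredients (logspace computation of the axis and of each interval's endpoints, and logspace counting of the attributed sets) are routine bookkeeping.
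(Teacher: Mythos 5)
Your proposal is correct and follows essentially the same route as the paper's proof: guess the optimal coverage value up front and write it in the most significant output position, sweep the axis choosing committee points in increasing order, count newly hit voters via the leftmost-hit attribution $p' < \ell_j \le p \le r_j$ (exactly the paper's Line~7 condition $s_i > lastc$ and $nextc \in [s_i,t_i]$), and accept only when the verified coverage matches the guess. The only differences are cosmetic encoding choices---you use a fixed-width binary value plus an $m$-bit characteristic vector where the paper uses unary padding $1^{optv \cdot m(m+1)}0$ followed by unary blocks for the chosen candidates---which achieve the same dominance of coverage in the \optl\ maximization.
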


\begin{proof}
Assume that there are $m$ candidates and $n$ voters. 
Recall that a collection of votes is single-peaked if there exists a linear order of the candidates (a single-peaked axis) such that each vote forms a contiguous interval on the axis. Let $L$ be the single-peaked 
axis given by the logarithmic space algorithm that computes an SP axis.

For $1 \leq i \leq n$, let $[s_i,t_i]$ be the interval corresponding to the $i$th voter, i.e., $s_i$ and $t_i$ are in $\{1, \ldots, m\}$ and the $i$th voter approves the candidates from the $s_i$th candidate to the $t_i$th candidate on our SP axis $L$. 
Note that we can't store all these values, but we can recompute each value in logarithmic space whenever it is needed.

We now need to compute a set of $k$ candidates that hit (have nonempty intersection with) the most voters (intervals). We only keep track of the last candidate ($lastc$) and the next candidate chosen ($nextc$),
the number of candidates selected so far ($numc$), the number of voters hit so far ($numv$), and a counter to iterate through the voters ($i$). We also need to make sure that the maximum output value corresponds to a set of $k$ candidates that hits the most voters ($optv$).
Our algorithm is presented in Algorithm~\ref{alg:spcc}.

\begin{algorithm}[tb]
    \caption{SP-CC}
    \label{alg:spcc}
    \begin{algorithmic}[1] %
    \STATE guess $optv \leq n$
    \STATE \textbf{append} $1^{optv \cdot m(m+1)}0$ to output
    \STATE $lastc = 0$; $numv = 0$
    \FOR{$numc = 1$ to $k$}
        \STATE guess $nextc$ such that $lastc < nextc \leq m$
        \STATE \textbf{reject} if $lastc \geq m$
        \STATE $numv$ += $\| \{ [s_i,t_i] \ | \ s_i > lastc \text{ and } nextc \in [s_i,t_i] \} \|$ 
        
        \label{alg:spcc:numv}
        
        \STATE $lastc = nextc$
        \STATE \textbf{append} $1^{lastc}0$ \ to output \label{alg:spcc:output}
    \ENDFOR
    \STATE \textbf{accept} if $numv = optv$, \textbf{reject} otherwise 
    \end{algorithmic}
\end{algorithm}

Note that every interval that is hit is counted exactly once, namely the first time we choose a candidate that is in that interval. Line~\ref{alg:spcc:numv}
ensures that later iterations do not count this interval again. Also note that the total length of the output produced in Line~\ref{alg:spcc:output}
is clearly upper bounded by $m(m+1)$, and so it follows that larger values of $optv$ will always give larger output values. Thus, the maximum output on an accepting path will correspond to a solution of the SP-CC problem.
\end{proof}

\begin{theorem}
\label{sccc}
    SC-CC is in \optl.
\end{theorem}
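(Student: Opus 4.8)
The plan is to mirror the structure of the proof of Theorem~\ref{spcc}, exploiting the duality noted just before it: whereas SP-CC asks for $k$ points (candidates) hitting the most intervals (voters), SC-CC asks for $k$ intervals (candidates) whose union covers the most points (voters). First I would invoke the Observation to compute, in logarithmic space, a single-crossing axis on the voters, so that each candidate $c$ is identified with the interval $[s_c,t_c]$ of voters who approve it (recomputing any $s_c,t_c$ on demand rather than storing them, exactly as in the SP-CC proof). I would then fix the total order on candidates already used in Figure~\ref{fig:spsc-example}: nondecreasing starting point $s_c$, breaking ties by index, and re-index the candidates as $1,\dots,m$ in this order. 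As in Algorithm~\ref{alg:spcc}, the \nl\ transducer guesses the committee one candidate at a time, always guessing a \emph{strictly larger} index than the previously chosen one, which simultaneously enforces distinctness of the $k$ chosen candidates and guarantees that their intervals are processed in nondecreasing order of starting point.

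The crux---and the step I expect to be the main obstacle---is counting the size of the union of the guessed intervals using only logarithmic work space, since we cannot afford to remember which intervals have been selected. The key observation that makes this possible is that, when intervals are processed in nondecreasing order of starting point, the running union is determined by a single number: $R$, the maximum right endpoint among the intervals chosen so far. Concretely, when a new interval $[s,t]$ is added (with $s$ at least the starting points of all previously chosen intervals), every voter in $[s,R]$ is already covered, because the previously chosen interval attaining the maximum right endpoint $R$ has starting point at most $s$ and therefore contains all of $[s,R]$; and no voter above $R$ can already be covered, since every previously chosen interval ends at or before $R$. Hence the number of newly covered voters is exactly $\max\bigl(0,\; t - \max(R,\,s-1)\bigr)$, after which we update $R$ to $\max(R,t)$. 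Both quantities are computable in logarithmic space from $s$, $t$, and $R$ alone, so maintaining the running union size together with $R$ suffices, and each covered voter is counted exactly once.

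Finally I would assemble these pieces into an \nl\ transducer in the style of Algorithm~\ref{alg:spcc}: guess a target value $optv \leq n$; append a dominating prefix $1^{optv\cdot B}0$, where $B$ bounds the total length of the per-round output (taking $B = m(m+1)$ suffices, since each of the $k \le m$ rounds appends an encoding of the chosen index of length at most $m+1$); then run the loop above, appending an encoding of each chosen candidate and updating the running union size and $R$; and finally accept if and only if the running union size equals $optv$. The dominating prefix guarantees that, across accepting paths, a larger $optv$ always yields a larger output value, so the maximum output corresponds to the largest achievable union size---i.e., the true CC score---while the bounded suffix encodes an actual witnessing committee. Since all bookkeeping ($optv$, the last chosen index, the running union size, $R$, the loop counter, and the interval endpoints recomputed on demand) fits in logarithmic space, this is a valid \optl\ computation, placing SC-CC in \optl.
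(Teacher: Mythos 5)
Your proposal is correct and takes essentially the same approach as the paper's proof: your running maximum right endpoint $R$ is exactly the paper's $lastv$, your incremental count $\max\bigl(0,\; t - \max(R,\, s-1)\bigr)$ coincides with the paper's $\left\| [lastv+1,n] \cap [s_{nextc},t_{nextc}] \right\|$, and your transducer (dominating prefix $1^{optv \cdot m(m+1)}0$, strictly increasing guesses along the start-sorted candidate order, accept iff the running count equals $optv$) is Algorithm~\ref{alg:sccc} almost verbatim. Your explicit justification that tracking $R$ alone suffices---the interval attaining $R$ starts no later than the new interval, so $[s,R]$ is already covered and nothing above $R$ is---is the same sorted-order argument underlying the paper's correctness claim.
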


\begin{proof}
Assume that there are $m$ candidates and $n$ voters. 
Recall that a collection of votes is single-crossing if there exists a linear order of the voters (a single-crossing axis) such that for each candidate the set of voters that approve that candidate forms a contiguous interval on the axis. Let $L$ be the single-crossing 
axis given by the logarithmic space algorithm that computes an SC axis.

For $1 \leq i \leq m$, let $[s_i,t_i]$ be the interval corresponding to the $i$th candidate in nondecreasing order of $s_i$, i.e., $s_i$ and $t_i$ are in $\{1, \ldots, n\}$ and the $i$th candidate (in an order where the candidates are ordered in nondecreasing order of $s_i$) is approved by the voters from the $s_i$th voter to the $t_i$th voter on our SC axis $L$.
Note that we can't store all these values, but we can recompute each value in logarithmic space whenever it is needed.

We need to compute a set of $k$ intervals (candidates)
that cover the most voters, i.e., whose union is the largest. Note that we can't simply guess $k$ intervals and compute the size of the union, since we do not have the space to store $k$ intervals. But we can limit what we need to store when going through the intervals in nondecreasing order of start ($s_i$). We only keep track of the last interval ($lastc$) and the next interval chosen ($nextc$),
the number of intervals (candidates) selected so far ($numc$), the number of voters covered so far ($numv$), and the last voter covered ($lastv$). We also need to make sure that the largest output corresponds to a set of $k$ intervals that covers the most voters ($optv$).
Our algorithm is presented in Algorithm~\ref{alg:sccc}.

\begin{algorithm}[tb]
    \caption{SC-CC}
    \label{alg:sccc}
    \begin{algorithmic}[1] %
    \STATE guess $optv \leq n$
    \STATE \textbf{append} $1^{optv \cdot m(m+1)}0$ to output
    \STATE $lastv = 0$; $lastc = 0$; $numv = 0$
    \FOR{$numc = 1$ to $k$}
        \STATE guess $nextc$ such that $lastc < nextc \leq m$
        \STATE \textbf{reject} if $lastc \geq m$
        \STATE $numv $ += $ \left\| [lastv + 1, n] \cap [s_{nextc},t_{nextc}] \right\|$ 
        \STATE \textbf{append} $1^{nextc}0$ to output \label{alg:sccc:output}
        \STATE $lastc = nextc$
        \IF{$t_{lastc} > lastv$}
            \STATE $lastv = t_{lastc}$
        \ENDIF
    \ENDFOR
    \STATE \textbf{accept} if $numv = optv$, \textbf{reject} otherwise 
    \end{algorithmic} 
\end{algorithm}
Note that the total length of the output produced in Line~\ref{alg:sccc:output} is clearly upper bounded by $m(m+1)$, and so it follows that larger values of $optv$ will always give larger output values. Thus, the maximum output on an accepting path will correspond to a solution of the SC-CC problem.
\end{proof}

\section{Conclusion and Future Work}

We systematically studied the parallelizability of
finding a winning committee for all of the polynomial-time Approval-Based Committee rules studied by~\citet{lac-sko:b:abc-rules} in
their recent textbook. We find that with the exception
of two simple cases, as pointed out by~\citet{lac-sko:b:abc-rules}, all of the remaining rules are inherently sequential. We further explored the parallelizability of ABC rules by considering restricted domains and found  that for the natural
settings of single-peaked and single-crossing votes
finding a Chamberlin-Courant committee is parallelizable. We showed this result by giving algorithms that show these problems are in the complexity class OptL, the first results of this type in computational social choice.

There are clear directions for future work. It would be interesting to see which other inherently sequential ABC rules 
are parallelizable under domain restrictions.
In general, since many real-world elections can be quite large, further study of the parallelizability of election problems is warranted.

\section*{Acknowledgments}

We would like to thank Dominik Peters, Tomoyuki Yamakami, and the anonymous reviewers for their helpful comments and suggestions.
This work was supported in part by NSF grants CCF-2421977 and CCF-2421978.

\appendix

\section{Logspace Algorithms for AV and SAV}
\label{sec:av-sav-fl}

For the sake of completeness---to ensure we study all polynomial-time rules in~\citet{lac-sko:b:abc-rules}, we show in this section that Approval Voting and Satisfaction Approval Voting are in FL (the class of functions computable in logarithmic space).

\begin{definition}
    [Approval Voting (AV)]
    Given candidates $C$, collection of votes $V$, 
    and committee size $k$ 
    output the winning committee of AV, i.e., the $k$ candidates with the most votes. Formally, find the committee $W$ such that 
    \[\sum_{ v \in V } \|W \cap v\|\]
    is maximized.
    Ties are broken lexicographically.
\end{definition}

Satisfaction Approval Voting (SAV) is defined similarly, where the function being maximized is instead 
\[ \sum_{ v \in V } \frac{\|W \cap v\|}{\|v\|}\]

\begin{theorem}
(Satisfaction) Approval Voting is in \fl.
\end{theorem}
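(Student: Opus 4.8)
The plan is to exploit the fact that both objectives are \emph{additively separable over candidates}, which turns committee selection into an independent-score ranking problem. For AV we have $\sum_{v\in V}\|W\cap v\|=\sum_{c\in W}s(c)$ with $s(c)=\|\{v\in V: c\in v\}\|$, and for SAV we have $\sum_{v\in V}\|W\cap v\|/\|v\|=\sum_{c\in W}s(c)$ with $s(c)=\sum_{v\ni c}1/\|v\|$. In both cases the objective is maximized by taking the $k$ candidates of largest score, so the rule is really ``pick the top $k$ candidates under the total order $\prec$,'' where $c\prec c'$ means $s(c)>s(c')$, or $s(c)=s(c')$ together with $c$ preceding $c'$ lexicographically (this encodes the lexicographic tie-breaking). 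Equivalently, the winning committee is exactly $\{c\in C : \|\{c'\in C : c'\prec c\}\| < k\}$.

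First I would reduce the whole computation to a \emph{pairwise comparison subroutine}. An \fl\ transducer iterates over the candidates $c$ (in lexicographic order, with a counter of $O(\log\|C\|)$ bits), and for each $c$ it computes its rank by scanning all $c'$ and incrementing a second counter whenever $c'\prec c$; it writes $c$ to the output tape iff this rank is less than $k$. All bookkeeping uses only $O(\log)$ bits, so the only nontrivial ingredient is deciding $c'\prec c$ in logarithmic space. For AV this is immediate: $s(c)$ and $s(c')$ are integers bounded by $\|V\|$, each computable by a single logspace counter scanning $V$, and the tie-break is a lexicographic index comparison. Thus AV will follow as an easy special case.

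The crux is the SAV comparison, i.e.\ deciding the sign of $s(c)-s(c')=\sum_{v\ni c}1/\|v\|-\sum_{v\ni c'}1/\|v\|$. I would group votes by size: for each $t\in\{1,\dots,m\}$ let $a_t$ (resp.\ $b_t$) be the number of size-$t$ votes approving $c$ (resp.\ $c'$), each computable by a logspace scan. Then the sign in question equals the sign of $\sum_{t=1}^{m}(a_t-b_t)/t$, and multiplying by any common multiple $D$ of $1,\dots,m$ (e.g.\ $D=\operatorname{lcm}(1,\dots,m)$ or $D=m!$) gives the \emph{integer} $N=\sum_{t=1}^{m}(a_t-b_t)\,(D/t)$ with the same sign, since each $D/t$ is an integer. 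Writing $N=P-Q$ as the difference of its nonnegative positive-part and negative-part sums, deciding $c'\prec c$ reduces to comparing two nonnegative integers $P$ and $Q$. I would finish by invoking the classical fact that integer iterated addition, iterated multiplication, and division are computable in logarithmic space (they lie in $\mathrm{TC}^0\subseteq\nc^1\subseteq\mathrm{L}$), so the quantities $D$, $D/t$, the products, and the partial sums $P,Q$ are logspace-computable bit by bit, and comparing two logspace-computable nonnegative integers (scanning bits from most significant down) is in logspace. Composing these logspace subroutines keeps the whole procedure in \fl.

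The main obstacle is exactly this exact comparison of sums of unit fractions, and I expect it to be the only delicate point. It is tempting to compute the scores to a few bits of precision and read off the sign, but that fails: two distinct SAV scores can differ by as little as $1/\operatorname{lcm}(1,\dots,m)$, an exponentially small gap, so no fixed polynomial-precision truncation determines the order. Hence genuinely exact integer arithmetic is required, and since the multipliers $D$ and $D/t$ have $\Theta(m)$ bits they cannot be stored outright in logarithmic space; the argument must therefore rely on logspace arithmetic routines that produce any requested bit of these quantities on demand rather than materializing them, which is precisely what the $\mathrm{TC}^0\subseteq\mathrm{L}$ results provide.
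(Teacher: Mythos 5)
Your proposal is correct, and its skeleton is essentially the paper's: the paper's Algorithm for AV likewise reduces committee selection to the total order given by score with lexicographic tie-breaking, computes each candidate's rank by a pairwise-comparison scan using $O(\log)$-bit counters, and writes the candidate iff it is among the top $k$. Where you genuinely add content is SAV: the paper dismisses it with ``it is easy to see that this algorithm can be modified,'' whereas you isolate the one nontrivial point --- exactly deciding the sign of $\sum_{t}(a_t-b_t)/t$ --- and resolve it by clearing denominators with $D=\operatorname{lcm}(1,\dots,m)$ and invoking the known logspace computability of iterated addition, iterated multiplication, and division ($\mathrm{TC}^0\subseteq\mathrm{L}$, per Chiu--Davida--Litow and Hesse--Allender--Barrington), composing logspace routines that produce bits of $D$, $D/t$, and the partial sums on demand. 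That is sound, and strictly more careful than the paper on the only delicate step. One quibble with your motivational remark: the claim that ``no fixed polynomial-precision truncation determines the order'' is wrong as stated, since distinct SAV scores differ by at least $1/\operatorname{lcm}(1,\dots,m)\geq e^{-O(m)}$, so $\Theta(m)$ bits of precision --- which is polynomial --- do suffice; the genuine obstruction is that a logspace machine cannot \emph{store} a $\Theta(m)$-bit accumulator, which is exactly what your on-demand bit computation circumvents, so the proof itself is unaffected. Incidentally, your output condition $\|\{c' : c'\prec c\}\|<k$ is the correct one; the paper's pseudocode test $rank>\|C\|-k$ appears to be off by one (it should be $rank\geq\|C\|-k$), so on this detail your version improves on the paper's.
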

\begin{proof}
    (Satisfaction) Approval Voting simply picks the set of $k$ candidates with the most (weighted) votes; the order in which the committee is built does not matter.
    We provide an algorithm for Approval Voting in Algorithm~\ref{alg:avlogspace}. Observe that 
    there are a constant number of variables and they can be stored in logspace.
    It is easy to see that this algorithm can be modified to obtain a logspace algorithm for SAV.
\end{proof}
\begin{algorithm}[h]
    \caption{Approval Voting}
    \label{alg:avlogspace}
    \textbf{Input}: Candidates $C$, collection of voters $V$, and committee size $k$. \\
    \textbf{Output}: Committee of AV of size $k$.
    
    \begin{algorithmic}[1] %
    \FOR{$c \in C$}
        \STATE count votes for $c$ and store in $cvotes$
        \STATE let $rank = 0$
        \FOR{$d \in C$}
            \STATE count votes for $d$ and store in $dvotes$
            \IF{ $cvotes > dvotes$ \\ \hfill or ($cvotes = dvotes$ and $c < d$)}
            \STATE $rank$ += $1$
        \ENDIF
        \ENDFOR
        
        \IF{ $rank > \|C\| -k$}
            \STATE \textbf{append} $c$ to output
        \ENDIF
    \ENDFOR
    \end{algorithmic}
\end{algorithm}

\section{LFMIS is \p-complete for 3-regular Graphs}
\label{sec:lfmis}

Since many of our proofs become simpler with a 3-regular version of LFMIS, we prove its P-completeness here.

\begin{figure}[h]
    \centering
    \includegraphics[width=0.5\linewidth]{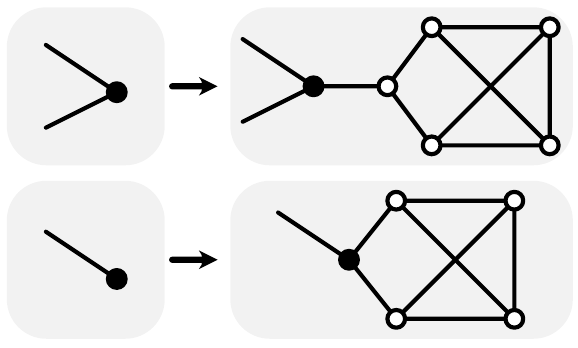}
    \caption{In this figure, we show how to append subgraphs to degree 1 and degree 2 vertices for our proof that LFMIS is \p-complete even when the input graph is 3-regular.}
    \label{fig:lfmis3reg}
\end{figure}

\begin{lemma}
    LFMIS is \p-complete even when the input graph is 3-regular.
\end{lemma}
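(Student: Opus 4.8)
The plan is to give a logspace many-one reduction from LFMIS on subcubic graphs---which is \p-complete~\citep{m:j:lfms}---to LFMIS on 3-regular graphs. Membership in \p\ is immediate, since the lexicographically-first greedy can be simulated in polynomial time, so the work is entirely in the hardness reduction. Given a subcubic instance $(G,v)$ with vertices labeled $1,\ldots,n$, I would build a 3-regular graph $G'$ by padding every vertex whose degree is less than $3$ with gadgets, keeping $v'=v$, and crucially arranging that every newly added vertex receives a label larger than all of $1,\ldots,n$.

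The padding uses a fixed gadget $H$ that is \emph{almost} $3$-regular: it has a single \emph{port} vertex of degree $2$ while all remaining vertices have degree $3$, so that adding one edge from the port to an outside vertex makes $H$ internally $3$-regular and raises the outside vertex's degree by exactly one. Concretely, $H$ has vertices $p,a,b,c,d$ with edges $\{p,a\},\{p,b\},\{a,c\},\{a,d\},\{b,c\},\{b,d\},\{c,d\}$; here $p$ has degree $2$ and $a,b,c,d$ each have degree $3$. For each original vertex $u$ with deficiency $\delta_u=3-\deg_G(u)\in\{0,1,2,3\}$, I would attach $\delta_u$ fresh, vertex-disjoint copies of $H$ and join $u$ to the port of each copy. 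This makes $\deg_{G'}(u)$ exactly $3$ and leaves every gadget vertex at degree $3$, so $G'$ is $3$-regular; since the copies are disjoint and joined to $u$ at distinct ports, $G'$ is simple.

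For correctness, the key observation is that the lexicographically-first greedy on $G'$ examines vertices in label order, so it processes all of $1,\ldots,n$ before touching any gadget vertex. Consequently, when an original vertex $u$ is examined, no gadget vertex has yet been selected, so whether $u$ is added depends only on the previously-decided original vertices and on the edges among $1,\ldots,n$; since those edges are exactly the edges of $G$, the greedy makes precisely the same decision at $u$ as it does in $G$. By induction over $1,\ldots,n$, the LFMIS of $G'$ restricted to $V(G)$ equals the LFMIS of $G$, and in particular $v\in\mathrm{LFMIS}(G)$ iff $v\in\mathrm{LFMIS}(G')$. The reduction is logspace: each degree is computed with a logarithmic counter by scanning the edge list, and the gadget vertices and edges are emitted from a systematic labeling scheme (e.g., encoding the tuple of $u$, copy index, and internal role), all with a constant number of pointers and counters.

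The main obstacle---and the single idea that makes everything work---is guaranteeing that the padding cannot perturb the greedy's behavior on the original vertices; this is secured by forcing every gadget vertex to come later in the lexicographic order, which prevents any gadget vertex from being selected while an original vertex is still under consideration. The remaining points are routine: exhibiting a gadget with the required degree profile (done above), checking that attaching $\delta_u$ disjoint copies keeps the construction simple and $3$-regular, and confirming the logspace bound. I note also that because each gadget is self-contained, no global parity condition on $\sum_u(3-\deg_G(u))$ needs to be satisfied.
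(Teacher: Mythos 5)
Your proposal is correct and follows essentially the same approach as the paper: reduce from Miyano's subcubic LFMIS by padding each deficient vertex with gadgets whose vertices are labeled lexicographically \emph{after} all original vertices, so the greedy's decisions on $1,\ldots,n$ are untouched and $v$'s membership is preserved, with the reduction computable in logspace. The only difference is cosmetic---the paper attaches specific subgraphs to degree-$1$ and degree-$2$ vertices (its Figure~\ref{fig:lfmis3reg}), while you use one uniform degree-$2$-port gadget attached once per unit of deficiency, which also cleanly covers isolated vertices.
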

\begin{proof}
    \cite{m:j:lfms} showed that LFMIS is \p-complete even when each vertex in the input graph has degree at most 3.
    Suppose $(G,v)$ is the input to LFMIS. We construct a graph $H$, that has $G$ as a subgraph, where each vertex has degree exactly 3 and $v$ is in the LFMIS of $G$ if and only if $v$ is in the LFMIS of $H$.

    We construct $H$ by taking a copy of $G$ and appending a subgraph to each degree 1 and degree 2 vertex. We show these subgraphs in Figure~\ref{fig:lfmis3reg}. 
    Observe that all vertices in $H$ now have degree 3.
    The vertices of these subgraphs will be labeled so that they appear lexicographically after the vertices in $G$. Thus, the existence of these new vertices will not influence whether or not $v$ will be added to the LFMIS. The reduction is clearly in logspace, so our statement holds.
\end{proof}

\section{Inherently Sequential ABC Rules}
\label{sec:p-hard-appendix}

In this section, we show the remaining proofs that all the polynomial-time rules (except AV and SAV) studied by \citet{lac-sko:b:abc-rules} are inherently sequential. 
Surprisingly for many of the rules we use the same reduction (though the proofs of correctness are, of course, different).
Definitions and examples for each rule can be found in the book by~\citet{lac-sko:b:abc-rules}. For completeness, we provide our own definitions below as well.

\subsection{Sequential Thiele Methods}
\label{sec:thiele}

Recall that in Section~\ref{sec:cc}, we showed the \p-hardness of seq-CC. 
seq-CC is part of a general family of ABC rules known as sequential Thiele methods. 
For a collection of votes $V$ for candidates $C$,
Thiele methods find a committee $W \subseteq C$ of size $k$ that optimizes a function of the form 
\[\sum_{v \in V} w \left( \| W \cap v \| \right)\] 
for some nondecreasing function $w$.
For example, the function $w$ corresponding to 
seq-CC is $w(x) = \min \left(1,x\right)$.

The greedy approach to build a committee that optimizes this function is called seq-$w$-Thiele, and is defined similarly to seq-CC.

\begin{definition}[seq-$w$-Thiele]
    Let $w: \mathbb{Z}_{\geq 0} \xrightarrow{} \mathbb{R}$ be a fixed nondecreasing function with $w(0) = 0$.
    Suppose we are given candidates $C$, collection of votes $V$, and committee size $k$.
    seq-$w$-Thiele sequentially builds a committee, $W$, of size $k$ by repeatedly 
    adding to $W$ 
    the candidate whose inclusion increases the value
    \[\sum_{v \in V} w \left( \| W \cap v \| \right)\] 
    the most. 
    Ties are broken lexicographically.
\end{definition}

Another popular Thiele method is Proportional Approval Voting (PAV). In PAV, a voter's contribution to the score of the committee is based on how many approved candidates it voted for; if it voted for one candidate that was approved, it contributes a score of $1$, if it voted for 2, it contributes a score of $1 + 1/2$, and so on. This scoring is captured by the harmonic function:
$h(x) = \sum_{k = 1}^x 1/x$.
In \citet{lac-sko:b:abc-rules}, seq-$h$-Thiele is referred to as seq-PAV.
Below, we prove that computing seq-$w$-Thiele is inherently sequential for any rule where we have $w(0) = 0$, $w(1) = 1$, and $w(2) = 1+\epsilon$, where $\epsilon \in [0,1)$ is a rational number. Note that seq-CC and seq-PAV both fit this class of Thiele methods.

\begin{theorem}
    Computing the winning committee for seq-$w$-Thiele is inherently sequential for any nondecreasing function $w$ with $w(0) = 0$, $w(1) = 1$, and $w(2) = 1 + \epsilon$ for any rational $\epsilon \in [0,1)$, even when tie-breaking is never invoked.
\end{theorem}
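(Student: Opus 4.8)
The plan is to reduce from LFMIS on 3-regular graphs (\p-complete by the preceding lemma), mirroring the structure of the seq-CC and MES reductions but choosing vote multiplicities that exploit the fixed gap $w(2)-w(1)=\epsilon<1=w(1)-w(0)$. Given a 3-regular instance $(G,v)$ with vertices $1,\dots,n$ in lexicographic order, I would create one candidate per vertex plus $n$ \emph{filler} candidates $f_1,\dots,f_n$, set $d=3$, and fix an integer $M$ with $(1-\epsilon)M>2n$ (possible since $\epsilon$ is a fixed rational, so $1-\epsilon$ is a fixed positive constant and $M=O(n)$). The votes are: (i) $M$ copies of the two-candidate vote $\{i,j\}$ for each edge $\{i,j\}$; (ii) $n-i$ copies of the singleton vote $\{i\}$ for each vertex $i$ (a lexicographic perturbation); and (iii) $dM-t$ copies of the singleton vote $\{f_t\}$ for each filler $f_t$. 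Finally set $c=v$ and $k=n$.

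The crucial observation is that every vote approves at most two candidates, so $\|W\cap v\|\le 2$ throughout the greedy execution and the marginal gain of any candidate depends only on $w(0),w(1),w(2)$---which is exactly why the result holds for every $w$ agreeing on these three values regardless of its behavior beyond $2$. Computing marginals, a graph vertex with no selected neighbor (an \emph{independent} vertex) has gain $dM+(n-i)$, a vertex with $d_{\mathrm{in}}\ge 1$ selected neighbors has gain $dM-(1-\epsilon)d_{\mathrm{in}}M+(n-i)$, and filler $f_t$ always has gain $dM-t$. With $(1-\epsilon)M>2n$ these quantities fall into three disjoint intervals: independent vertices (values in $[dM,dM+n-1]$) strictly dominate fillers (values in $[dM-n,dM-1]$), which in turn strictly dominate every adjacent vertex (values below $dM-n$).

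I would then argue that the first $n$ picks proceed in two clean phases. As long as an independent vertex exists it is the unique maximizer, and among independent vertices the perturbation $n-i$ makes the smallest index win, so seq-$w$-Thiele selects exactly the LFMIS vertices, in LFMIS order; since $d_{\mathrm{in}}$ only grows, a vertex that becomes adjacent stays adjacent, matching the monotone ineligibility of LFMIS. Once the selected set is maximal independent, the unique maximizer is always the cheapest remaining filler, so the remaining $n-p$ slots are filled by $f_1,\dots,f_{n-p}$ and no non-LFMIS vertex is ever chosen. At each of the $n$ steps the maximizer is unique, since the independent, filler, and adjacent values are pairwise distinct in the relevant ranges, so tie-breaking is never invoked. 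Consequently $v\in W$ if and only if $v$ is in the LFMIS of $G$, which also covers seq-CC ($\epsilon=0$) and seq-PAV ($\epsilon=\tfrac12$) as special cases.

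The main obstacle is the simultaneous calibration in the second step: the multiplicity $M$ must be large enough to separate the independent, filler, and adjacent regimes (using only that $1-\epsilon$ is a fixed positive constant), while the integer perturbations $n-i$ must enforce the lexicographic order inside the independent regime without leaking into a neighboring regime or creating a tie. Verifying these disjoint-interval inequalities, and confirming that the whole construction---polynomially many votes with counters bounded by $dM=O(n)$---is computable in logarithmic space, is the technical heart of the argument; the reduction from the 3-regular version of LFMIS is what supplies the degree-regularity that makes all independent vertices share the common base value $dM$, and this is precisely what makes the calibration possible.
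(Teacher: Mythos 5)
Your proposal is correct and follows essentially the same route as the paper's own proof: a reduction from LFMIS on 3-regular graphs with high-multiplicity edge votes, singleton ``perturbation'' votes of multiplicity $n-i$ to enforce lexicographic order tie-free, and filler candidates with slightly fewer votes to absorb the remaining committee slots. Your condition $(1-\epsilon)M > 2n$ plays exactly the role of the paper's choice $p = \lceil 4/(1-\epsilon) \rceil$ with multiplicity $pm$, and your three-disjoint-interval separation (independent vertices above fillers above adjacent vertices) is precisely the content of the paper's key claim, stated in slightly sharper form.
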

\begin{proof}
    Suppose we are given an instance $(G, v)$ of LFMIS, where $G$ is a 3-regular graph with $m$ vertices labeled $1$ to $m$. We will construct an election with candidates $C$, collection of votes $V$, and committee size $k$, and specify a candidate $c \in C$ such that the winning committee selected includes $c$ if and only if $v$ is included in the LFMIS of $G$.
    
    Let $C$ have a candidate for each vertex of $G$.
    Let $p := \lceil \frac{4}{1-\epsilon} \rceil$.
    For each $\{a,b\} \in E(G)$, we add $pm$ votes   
    $\{a,b\}$ to $V$.
    We also add $(m-a)$ singleton votes for each $a \in C$.
    Note that each candidate $a \in \{1, 2, \ldots, m\}$ has $3pm + (m-a)$ votes. 
    Finally, add $m$ extra candidates, labeled $m+1$ to $2m$, that have $3pm-1, 3pm-2, \ldots, 3pm-m$ votes.
    Note that these extra candidates have fewer votes than all the candidates corresponding to vertices in $G$.
    It is clear that there are no ties in the election.
    Finally, we set $c = v$ and $k = m$.
    We now show that the candidates picked from $V(G)$ correspond exactly to the vertices in the LFMIS of $G$. 

    \begin{claim}
    \label{claim:thiele-1}
        Let $a$ and $b$ be unelected candidates in round $r + 1$. If $b$ shares votes with a previously elected candidate and $a$ does not, then adding $a$ to $W$ increases the score more than adding $b$.
    \end{claim}
    \begin{proof}

        Any candidate $b$ that shares votes with a previously elected candidate gives an increase in score of
        \[ 2pm +(m-b) +  \epsilon pm \leq (2p+1+\epsilon p)m  \]
        Any candidate $a$ that shares no votes with a previously elected candidate has at least $3pm-m$ votes.

        We will show that $3pm - m$ is greater than $(2p+1+\epsilon p)m$. To do so, it is sufficient to show that 
        \begin{align*}
            3pm - m &> (2p+1+\epsilon p)m \\
            pm &> 2m + \epsilon pm \\
            (p- \epsilon p)m &> 2m \\
            p - \epsilon p &> 2 \\
        \end{align*}

    Since $ x + 1 > \lceil x \rceil \geq x  $ for any $x$, we have:
        
        \[ p - \epsilon p = \left\lceil \frac{4}{1-\epsilon} \right\rceil - \epsilon \left\lceil \frac{4}{1-\epsilon} \right\rceil \geq \frac{4}{1-\epsilon} - \epsilon \left( \frac{4}{1-\epsilon} +1 \right) = 4 - \epsilon \]

        We have $4 - \epsilon > 2 \implies 2 > \epsilon$ by definition of $\epsilon$.
    \end{proof}

    By the claim above and the fact that candidates with more votes (i.e., vertices with earlier lexicographic order) are preferred, it is clear that the candidates picked will correspond exactly to the vertices in the LFMIS of $G$.
    It is also clear that the tie-breaking rule is never invoked while building the committee.

    Once adding candidates from $V(G)$ is suboptimal, we will start picking the candidates labeled $m+1$ to $2m$ until the algorithm terminates. 
    
    Note that $k$ is large enough (at least the number of vertices in $G$) so that every candidate corresponding to a vertex in the LFMIS of $G$ will be picked. Similarly, the number of extra candidates is large enough so that we are not forced to pick any candidates corresponding to the vertices in $G$ after the LFMIS candidates have been picked.
    
    Therefore, $v$ is picked in the committee by seq-$w$-Thiele if and only if $v$ is part of the LFMIS. The reduction is clearly in logspace.
\end{proof}

Another way to build committees while trying to optimize the function described above is to start by selecting every candidate for the winning committee and remove the candidate whose removal decreases the objective function the least. For a function $w$, this method is referred to as rev-seq-$w$-Thiele:
\begin{definition}[rev-seq-$w$-Thiele]
    Let $w: \mathbb{Z}_{\geq 0} \xrightarrow{} \mathbb{R}$ be a fixed nondecreasing function with $w(0) = 0$.
    Suppose we are given candidates $C$, collection of votes $V$, and committee size $k$.
    rev-seq-$w$-Thiele builds a committee, $W$, of size $k$ by first adding every candidate to $W$ and then 
    repeatedly removing from $W$
    the candidate whose removal decreases the value
    \[\sum_{v \in V} w \left( \| W \cap v \| \right)\] 
    the least. 
    Ties are broken lexicographically.
\end{definition}

\begin{theorem}
    Computing the winning committee for rev-seq-$w$-Thiele is inherently sequential for any nondecreasing function $w$ with $w(0) = 0$, $w(1) = 1$, and $w(2) = 1 + \epsilon$ for any rational $\epsilon \in [0,1)$, even when tie-breaking is never invoked.
\end{theorem}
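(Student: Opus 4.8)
The plan is to reduce from LFMIS on $3$-regular graphs, exactly as in the previous theorem, and to reuse the edge-vote encoding while adjusting the padding, the extra candidates, and the committee size for the reverse process. Given a $3$-regular instance $(G,v)$ with $m$ vertices labeled $1$ to $m$, I create one candidate per vertex, add $pm$ copies of the vote $\{a,b\}$ for each $\{a,b\}\in E(G)$ with $p := \lceil 4/(1-\epsilon)\rceil$, add label-keyed singleton padding so that the candidates are strictly separated, and add enough extra candidates (each with its own singleton votes) so that a fixed committee size $k$ forces exactly the non-LFMIS vertex candidates to be removed. I set $c=v$. The goal is to prove that the surviving vertex candidates are precisely the LFMIS of $G$, so that $v$ lies in the committee if and only if $v$ is in the LFMIS of $G$.

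The core is a reverse analogue of Claim~\ref{claim:thiele-1}. For a present candidate $c$ with $j$ of its three neighbors still in the committee, removing $c$ decreases the objective by $pm\bigl(3-j(1-\epsilon)\bigr)$ from its edge votes, plus the lower-order loss from its unique padding. I will show that any candidate still sharing a duplicated edge vote with a present candidate (that is, $j\geq 1$) is strictly cheaper to remove than any candidate whose approved votes are all unique ($j=0$): the former loses at most $(2+\epsilon)pm$ plus at most $m$ of padding, while the latter loses at least $3pm$, and the gap $(1-\epsilon)pm>m$ follows from $p=\lceil 4/(1-\epsilon)\rceil$ exactly as in Claim~\ref{claim:thiele-1}. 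Consequently rev-seq always eliminates a candidate that is still in conflict before it touches any conflict-free candidate, so when the removals among vertex candidates stop the survivors form an independent set, while the extra candidates (which are always conflict-free) are shielded until the conflicts are resolved.

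It then remains to pin down \emph{which} independent set survives and to make the committee-size arithmetic work out. I will argue that the label-keyed padding (and, when $\epsilon=0$, the padding carries the whole lexicographic signal, since a shared edge then contributes nothing to the marginal loss) forces the removal, among conflicting candidates, to spare the lexicographically earlier endpoint of each conflict, so that a vertex is removed precisely when it has a surviving lexicographically earlier neighbor---exactly the LFMIS exclusion rule. Taking $k$ equal to the number of vertices and supplying $m$ extra candidates (as in the forward proof) guarantees that the fixed committee size removes every non-LFMIS vertex candidate together with enough extra candidates, while retaining all LFMIS vertex candidates, so $v$ survives if and only if $v$ is in the LFMIS of $G$. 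Distinct padding counts keep every per-step marginal-loss value distinct, so tie-breaking is never invoked, and the whole construction is plainly logspace.

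The main obstacle is the step that is trivial in the forward direction: ensuring that the reverse-removal order respects the lexicographic structure of LFMIS rather than merely producing \emph{some} maximal independent set. The natural marginal-loss order is governed primarily by how many of a candidate's neighbors remain present, and this count does not by itself honor lexicographic priority; a naive reuse of the forward construction would delete a high-label, all-neighbors-present vertex first, which need not match LFMIS. The delicate part is therefore to design the padding (and, if necessary, to make the size of each edge block depend on the labels of its endpoints) so that between two adjacent present candidates the lexicographically later one is always cheaper to remove, and to verify that this invariant is preserved dynamically as the committee shrinks, all while keeping the marginal-loss values distinct at every step and the construction logspace.
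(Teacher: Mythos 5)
There is a genuine gap, and it is precisely the step you flag as ``delicate'': your orientation of the reduction cannot be repaired by padding. You correctly compute that removing a vertex candidate with $j$ still-present neighbors loses $pm\bigl(3-j(1-\epsilon)\bigr)$ from its edge votes, so the loss is \emph{decreasing} in $j$ and the very first removal is always some all-neighbors-present ($j=3$) candidate, selected purely by its lower-order padding. But you want the \emph{survivors} to be the LFMIS, so you need every such first removal (and every subsequent one) to land outside the LFMIS. Whether a given $j=3$ vertex belongs to the LFMIS is a global property of the graph, not of its label: relabelings of, say, the $3$-cube place the largest-label vertex inside or outside the LFMIS at will, so no label-keyed singleton padding (nor label-dependent edge-block sizes) fixed in advance by a logspace reduction can steer the choice correctly --- the dynamic invariant you defer (``spare the lexicographically earlier endpoint, preserved as the committee shrinks'') is the entire theorem, and it fails already at round one. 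Your claim that $j\geq 1$ candidates are cheaper than $j=0$ candidates is arithmetically fine but is the wrong comparison for your goal; it only yields that \emph{some} vertices get removed before conflict-free ones, not which independent set survives. Your extra candidates are also broken: with only singleton votes, removing one costs just its singleton count, so for $\epsilon>0$ small-count extras are removed before any vertex candidate (wrecking the committee-size arithmetic) while large-count extras are never removed, and for $\epsilon=0$ they collide head-on with the padding-only cost of $j=3$ vertex candidates.

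The paper's proof embraces the inversion you were fighting. Since the loss $pm\bigl(3-j(1-\epsilon)\bigr)$ is minimized at $j=3$, the rev-seq removal process is \emph{itself} a greedy independent-set construction: giving candidate $a$ exactly $a$ singleton votes (padding \emph{increasing} in label), each round removes the lexicographically first candidate all of whose neighbors are still present, so the \emph{removed} set equals the LFMIS (this is Claim~\ref{clm:rev-seq-thiele-1}, which compares $j=3$ against $j\leq 2$ --- the operative direction, complementary to yours). One then reduces from the complement of LFMIS, which is legitimate since \p\ is closed under complementation: with $c=v$, the candidate $v$ survives in the committee if and only if $v\notin\mathrm{LFMIS}(G)$. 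Finally, the extras are not singleton-only: they come in $m$ pairs sharing $3pm$ votes each (plus distinct label-keyed singletons), so one member of each pair presents exactly the $j=3$ cost profile $3\epsilon pm + \text{label}$ with labels above $m$; they therefore queue strictly behind all eligible LFMIS vertices but strictly ahead of any conflicted vertex candidate, which makes the count $k=2m$ (i.e., $m$ removals) come out right and keeps all marginal losses distinct, so tie-breaking is never invoked.
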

\begin{proof}
    We reduce from the complement of LFMIS, which instead asks if the given vertex $v$ does not belong to the LFMIS of $G$.

    Suppose we are given an instance $(G, v)$ of $\overline{\text{LFMIS}}$, where $G$ is a 3-regular graph with $m$ vertices labeled $1$ to $m$. We will construct an election with candidates $C$, collection of votes $V$, and committee size $k$, and specify a candidate $c \in C$ such that the winning committee selected by rev-seq-$w$-Thiele does not include $c$ if and only if $v$ is included in the LFMIS of $G$.

    Let $C$ have a candidate for each vertex of $G$.
    Let $p := \lceil \frac{4}{1-\epsilon} \rceil$.
    For each $\{a,b\} \in E(G)$, we add $pm$ votes
    $\{a,b\}$ to $V$.
    We also add $a$ singleton votes for each $a \in C$.
    Note that each candidate $a \in \{1, 2, \ldots, m\}$ has $3pm + a$ votes. 
    Finally, add $2m$ extra candidates, labeled $m+1$ to $3m$.

    We add $3pm$ votes $\{m+1, m+2\}$, $3pm$ votes $\{m+3, m+4\}$, $\ldots$, and $3pm$ votes $\{3m-1, 3m\}$.
    We also add $a$ singleton votes for each $a \in \{ m+1, m+2, \ldots, 3m \}$. So each extra candidate $a$ has $3pm + a$ votes.
    Finally, let $k = 3m - m = 2m$ and $c=v$.

    We now show that the candidates removed from $V(G)$ correspond exactly to the vertices in the LFMIS of $G$.

    \begin{claim}
        \label{clm:rev-seq-thiele-1}
        Let $a$ and $b$ be selected candidates in round $r+1$. If $b$ shares votes with a previously removed candidate and $a$ does not, then removing $a$ from $W$ decreases the score less than removing $b$.
    \end{claim}
    \begin{proof}
        Removing $b$ gives a decrease of 
        \[ \epsilon(2pm) + pm + b \geq \epsilon(2pm) + pm \]
        Removing $a$ gives a decrease of
        \[\epsilon(3pm) + a \leq \epsilon(3pm) + 3m \]

        We will show that 
        $\epsilon(3pm) + 3m < \epsilon(2pm) + pm $. To do so, it is sufficient to show that 
        \begin{align*}
            \epsilon(2pm) + pm &> \epsilon(3pm) + 3m \\
            pm - \epsilon pm &>  3m \\
            p - \epsilon p &>  3
        \end{align*}

    Since $ x + 1 > \lceil x \rceil \geq x  $ for any $x$, we have:
        
        \[ p - \epsilon p = \left\lceil \frac{4}{1-\epsilon} \right\rceil - \epsilon \left\lceil \frac{4}{1-\epsilon} \right\rceil \geq \frac{4}{1-\epsilon} - \epsilon \left( \frac{4}{1-\epsilon} +1 \right) = 4 - \epsilon \]

        We have $4 - \epsilon > 3 \implies 1 > \epsilon$ by definition of $\epsilon$.
    \end{proof}

    By the claim above and the fact that candidates with fewer votes (i.e., vertices with earlier lexicographic order) are preferred, it is clear that the candidates removed will correspond exactly to the vertices in the LFMIS of $G$.
    It is also clear that the tie-breaking rule is never invoked while building the committee.
    
    Once removing candidates from $V(G)$ is suboptimal, we will start removing the candidates labeled $m+1$ to $3m$ until the algorithm terminates. 
    
    Note that $k$ is set so that $m$  candidates will be removed, i.e., every candidate corresponding to a vertex in the LFMIS of $G$ will be removed. Similarly, the number of extra candidates is large enough so that we are not forced to remove any candidates corresponding to the vertices in $G$ after the LFMIS candidates have been picked.

    Therefore, $v$ is removed from the committee by rev-seq-$w$-Thiele if and only if $v$ is part of the LFMIS; $v$ is part of the final committee if and only if it is not part of the LFMIS. The reduction is clearly in logspace.
\end{proof}

\subsection{seq-Phragmén}
\label{sec:phrag}

A reduction similar to the one given for seq-$w$-Thiele can be used to prove the \p-hardness of seq-Phragmén, a rule which builds a committee while trying to balance the ``load'' each voter bears for approving a candidate.
We now formally describe the process with which seq-Phragmén picks a committee.

\begin{definition}[seq-Phragmén]
    Suppose we are given candidates $C$, collection of votes $V$, and committee size $k$. We will assume that there are $n$ votes and each voter is identified with an integer in $[1,n]$.

    Let $y_r(i)$ be the load assigned to voter $i$ in round $r$ and $W$ be the set of candidates approved so far. Initially, $W = \emptyset$ and $y_0(i) = 0$ for all $i$. In the $r$th round, we calculate for each candidate the maximum load that would arise from including it in $W$:
    
    \[
    \ell_r(c) := \frac{1 + \sum_{i \in N(c)} y_{r-1}(i) }{\|N(c)\|}
    \]
    where $N(c)$ is the set of voters that approve of $c$.    
    The candidate, $c_r$ with the minimum $\ell_r(\cdot)$ is then added to $W$, and the loads of each voter are adjusted like so:
    
    \[
    y_r(i) = \begin{cases}
        \ell_r(c_r) & \text{if } i \in N(c_r) \\
        y_{r-1}(i) & \text{if } i \not\in N(c_r) 
    \end{cases}
    \]
    
\end{definition}

The proof for seq-Phragmén is quite similar to that of seq-$w$-Thiele.

\begin{theorem}
    Computing the winning committee for seq-Phragmén is inherently sequential, even when tie-breaking is never invoked.
\end{theorem}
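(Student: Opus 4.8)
The plan is to reduce from LFMIS on 3-regular graphs, which is \p-complete by the lemma of Section~\ref{sec:lfmis}, reusing essentially the same gadget as the seq-$w$-Thiele proof but arguing about Phragmén loads rather than marginal scores. Given $(G,v)$ with $G$ 3-regular on vertices $1,\dots,m$, I would create one candidate per vertex; for each edge $\{a,b\}\in E(G)$ add $pm$ identical votes $\{a,b\}$ (for a large fixed constant $p$), so every vertex-candidate is approved by exactly $3pm$ edge-voters; and, to encode the lexicographic order through vote counts so that tie-breaking is never invoked, add $m-a$ private singleton votes $\{a\}$ for each vertex $a$, giving $\|N(a)\|=3pm+(m-a)$. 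I would then add $m$ extra candidates $m{+}1,\dots,2m$, each with its own block of private voters whose size is chosen in a window described below, set $k=m$, and put $c=v$.

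The heart of the argument is the load analogue of the Thiele claim: \emph{if in some round candidate $a$ shares no voters with any elected candidate while $b$ shares voters with one, then $\ell(a)<\ell(b)$}, so the ``fresh'' candidate is always preferred. For a fresh $a$ all approvers still carry load $0$, so $\ell(a)=1/\|N(a)\|$. The key observation is that \emph{every} candidate has at most $3pm+m-1$ approvers (the maximum, attained at vertex $1$), hence every load ever assigned satisfies $\ell_r(c_r)=\bigl(1+\text{nonneg}\bigr)/\|N(c_r)\|\ge 1/(3pm+m-1)=\ell_1(1)$; therefore any already-paid approver carries load at least $\ell_1(1)$. An adjacent $b$ has at least one block of $pm$ such approvers, so its numerator is at least $1+pm/(3pm+m-1)\to 1+\tfrac13$, and comparing $1/\|N(a)\|$ with $(1+pm/(3pm+m-1))/\|N(b)\|$ using $\|N(a)\|,\|N(b)\|\in[3pm,3pm+m-1]$ reduces to a ratio bound that holds once $p$ is a large enough constant (an inequality of the same flavor as $p-\text{(lower order)}>\text{const}$ in the Thiele claims). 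Since Phragmén loads only increase, an excluded candidate stays dominated forever; consequently the vertex-candidates elected are exactly the LFMIS vertices, taken in lexicographic order because among fresh vertices the smallest index has the most approvers and hence the least load.

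Finally I would fix the extra candidates by choosing each private block to have size strictly between $\tfrac{9}{4}pm$ and $3pm$, so that their (constant) loads lie strictly above every fresh vertex load ($\le 1/(3pm)$) but strictly below the lower bound $\approx (4/3)/(3pm)$ on any excluded vertex load. This gives the persistent global order ``fresh vertex $\prec$ extra $\prec$ excluded vertex'': fresh and extra loads are constant (the LFMIS candidates form an independent set, so elected vertices never share edge-voters, and the extras have private voters), while excluded loads only grow. Hence seq-Phragmén elects all LFMIS vertices, then fills the remaining $m-|\mathrm{LFMIS}|$ seats with extras, never electing an excluded vertex; since $1\le|\mathrm{LFMIS}|\le m=k$ and there are $m$ extras, exactly $k$ candidates are elected and $v$ is in the committee iff $v\in\mathrm{LFMIS}$. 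Giving all singleton blocks and all extra blocks distinct sizes makes the minimum-load candidate unique in every round (coincidental ties can occur only among excluded candidates, which are never the minimum), so tie-breaking is never invoked, and the construction is plainly logspace-computable.

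I expect the main obstacle to be pinning down the load-comparison claim with the correct $p$: unlike the Thiele setting, where marginal contributions are clean integers, here I must lower-bound the load that a shared, already-paid block forces onto an adjacent candidate (via the uniform bound $\|N(c)\|\le 3pm+m-1$ on all denominators, which makes $\ell_1(1)$ a global minimum load) and then make the resulting ratio inequality survive the $O(m)$ perturbations introduced by the lexicographic singleton votes. Fitting the extra candidates into the narrow load window between fresh and excluded candidates, while keeping every count distinct to guarantee a unique minimizer each round, is the second delicate point.
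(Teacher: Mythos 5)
Your proposal is essentially the paper's own proof: the paper also reduces from LFMIS on 3-regular graphs with one candidate per vertex, high-multiplicity edge votes (it uses $m^2$ per edge where you use $pm$), the $m-a$ singleton padding votes to encode lexicographic order and kill ties, $m$ filler candidates with private singleton blocks, $k=m$, and the same key claim that a fresh candidate's load $1/\|N(a)\|$ beats a blocked candidate's load $\geq \bigl(1+m^2/(3m^2+m)\bigr)/(3m^2+m)$, yielding your exact three-tier ordering (the paper places the fillers at $3m^2-1,\dots,3m^2-m$ votes, just below the fresh vertex range). One quantitative nit you correctly flagged as delicate: your window's lower end $(9/4)pm$ is slightly too low once lower-order terms are tracked --- the blocked-load lower bound is $\approx \frac{4p+1}{(3p+1)^2 m} < \frac{4}{9pm}$, so filler blocks must exceed $\frac{(3p+1)^2}{4p+1}m = (9/4)pm + \Theta(m)$, which is fixed by taking the sizes just below $3pm$ exactly as the paper does.
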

\begin{proof}
    Suppose we are given an instance $(G, v)$ of LFMIS, where $G$ is a 3-regular graph with $m$ vertices labeled $1$ to $m$. Without loss of generality, we will assume that $m \geq 3$. We will construct an election with candidates $C$, collection of votes $V$, and committee size $k$, and specify a candidate $c \in C$ such that the winning committee selected by seq-Phragmén includes $c$ if and only if $v$ is included in the LFMIS of $G$.

    Our construction is 
    similar to the proof for Theorem~\ref{thm:mes-main}.
    Let $C$ have a candidate for each vertex of $G$.
    For each $\{a,b\} \in E(G)$, we add $m^2$ votes $\{a,b\}$ to $V$.
    We also add $(m-a)$ votes for each $a \in C$.
    Note that each candidate $a \in \{1, 2, \ldots, m\}$ has $\| N(a) \| = 3m^2 + (m-a)$ votes. 
    Finally, add $m$ extra candidates, labeled $m+1$ to $2m$, that have $3m^2-1, 3m^2-2, \ldots, 3m^2 -m$ votes.
    It is clear that there are no ties in the election.
    Finally, we set $c = v$ and $k = m$.
    
    We now show that the candidates picked from $V(G)$ correspond exactly to the vertices in the LFMIS of $G$.
    We first prove the following claim.

    \begin{claim}
    \label{clm:seq-phrag}
        Let $a$ and $b$ be unelected candidates in round $r$. If $b$ shares votes with an elected candidate and $a$ does not, then $\ell_r(b) > \ell_r(a)$.
    \end{claim}
    \begin{proof}
        Note that we have 
        
        \[ \ell_r(a) = \frac{1 + \sum_{i \in N(a)} y_{r-1} (i)}{\|N(a)\|} = \frac{1}{\|N(a)\|} \leq \frac{1}{3m^2 -m} \]
        and 
        Suppose $b$ shares votes with a previously elected candidate, $c$. The load of all the votes that $b$ and $c$ share is at least $1/\|N(c)\|$. There are $m^2$ such votes. Thus, we have:

        \begin{align*}
            \ell_r(b) &= \frac{1 + \sum_{i \in N(b)} y_{r-1} (i)}{\|N(b)\|} \\
            &\geq \frac{1+ m^2 \left(\frac{1}{\|N(c)\|}\right)}{\|N(b)\|} \\
            &\geq \frac{1+ m^2 \left(\frac{1}{3m^2 + m}\right)}{3m^2 + m}
        \end{align*}

        It can be verified that 
        \[ \frac{1+ m^2 \left(\frac{1}{3m^2 + m}\right)}{3m^2 + m} > \frac{1}{3m^2 -m} \]
        for all $m \geq 3$.
    \end{proof}

    In the first round, the candidate with the most votes is preferred since it minimizes the maximum load. Thus, candidate $1$ is picked. 
    By Claim~\ref{clm:seq-phrag} and the fact that candidates with more votes (i.e., vertices with earlier lexicographic order) are preferred, it is clear that the candidates picked after candidate 1 will correspond exactly to the vertices in the LFMIS of $G$.
    
    Once adding candidates from $V(G)$ is suboptimal, we will start picking the candidates labeled $m+1$ to $2m$ until the algorithm terminates. 
    
    Note that $k$ is large enough (at least the number of vertices in $G$) so that every candidate corresponding to a vertex in the LFMIS of $G$ will be picked. Similarly, the number of extra candidates is large enough so that we are not forced to pick any candidates corresponding to the vertices in $G$ after the LFMIS candidates have been picked.
    
    Therefore, $v$ is picked in the committee by seq-Phragmén if and only if $v$ is part of the LFMIS. The reduction is clearly in logspace.
\end{proof}

\subsection{Greedy Monroe}

Greedy Monroe sequentially elects candidates 
by repeatedly finding representatives for groups of voters and electing the candidate with the most votes among unrepresented voters. We provide a formal definition below.

\begin{definition}[Greedy Monroe (GM)]
    Suppose we are given candidates $C$, collection of votes $V$, and committee size $k$. We will assume that there are $n$ votes and each voter is identified with an integer in $[1,n]$.

    Greedy Monroe picks candidates in $k$ rounds.
    Let $V_r$ be the set of unrepresented voters at the start of round $r$; $V_1 = \{1,2,\ldots, n\}$.
    At the start of round $r+1$, GM finds the unelected candidate $c_{r+1}$ with the most number of voters among $V_{r+1}$ (ties are broken lexicographically).
    $c_{r+1}$ is added to the winning committee. Moreover, it is chosen as the representative for a subset of its voters. Let $H$ be the set of voters in $V_{r+1}$ that voted for $c_{r+1}$. 
    Let $G_{r+1}$ denote the representatives of $c_{r+1}$.
    At most $n/k$ of the voters are chosen and added to $G_{r+1}$ (if there are more than $n/k$ voters in $H$, then exactly $n/k$ voters are chosen lexicographically). 
    $V_{r+2}$ is set like so: $V_{r+2} = V_{r+1} - G_{r+1}$.
    
    For the case where $n$ is not divisible by $k$, the upper bound used for the number of selected voters each round is set like so.
    Let $d = n \mod k$. For rounds $1$ to $d$, 
    $\left \lceil n/k \right \rceil $ voters
    are selected.
    For rounds $d+1$ to $k$
    $\left \lfloor n/k \right \rfloor $ voters
    are selected.
\end{definition}

\begin{theorem}
    Computing the winning committee for Greedy Monroe is inherently sequential, even when tie-breaking is never invoked.
\end{theorem}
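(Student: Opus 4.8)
The plan is to reduce from LFMIS on 3-regular graphs (\p-complete by the lemma above), reusing the vertex/edge gadget of the seq-Phragm\'{e}n and seq-$w$-Thiele proofs but engineering it around the distinctive feature of Greedy Monroe: electing a candidate represents only up to $n/k$ of its voters. As before I would make each vertex a candidate, turn each edge $\{a,b\}$ into $N$ identical votes $\{a,b\}$ for a large weight $N$ (a sufficiently large polynomial in $m$, say $N=3m$), and add $(m-a)$ singleton votes for each candidate $a$ so that support strictly decreases in lexicographic rank and no two vertex-candidates are ever tied. Since $G$ is 3-regular, each vertex-candidate starts with exactly $3N+(m-a)$ supporters, so the lexicographically first candidate has the most votes and is elected first, matching the first choice of LFMIS.

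The heart of the argument is to force each elected vertex to consume \emph{all} of its edge-votes, so that every neighbor loses its entire block of $N$ shared votes, exactly mirroring how LFMIS forbids neighbors of a chosen vertex. To do this I would inflate the electorate with $m$ extra candidates, each supported by its own private block of voters, chosen so that with committee size $k=m$ the batch bound satisfies $\lfloor n/k\rfloor \ge 3N+m$ in every round. Then electing any available vertex---whose support is at most $3N+m\le\lfloor n/k\rfloor$---represents every one of its supporters in a single round, consuming all three incident edge-blocks. A vertex with $j\ge 1$ already-elected neighbors therefore has support $(3-j)N+(m-x)$, whereas an untouched available vertex retains $3N+(m-a)$. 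Because the per-edge weight $N$ exceeds $m$, losing even one edge-block swamps the entire range of the singleton padding ($<m$), so every available vertex strictly outscores every blocked vertex; Greedy Monroe thus always elects the available vertex of least index, precisely the vertex LFMIS adds next.

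I would fix the private support of each extra candidate to a distinct value strictly between $2N+m$ and $3N$, placing the extras in the ``gap'' below every available vertex ($\ge 3N$) but above every blocked vertex ($\le 2N+m-1$, as $j\ge 1$); the gap has width $N-m\ge 2m$, leaving ample room for $m$ distinct values. This yields two clean phases: while any available vertex remains, Greedy Monroe prefers it to the extras, so the vertices it elects from $V(G)$ are exactly the LFMIS of $G$; once the set is maximal and only blocked vertices remain, Greedy Monroe prefers the extras and never again touches a vertex-candidate. Since $k=m$ exceeds the size of any LFMIS and there are $m$ extras to fill the remaining slots, the final committee contains $v$ if and only if $v$ lies in the LFMIS. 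The three support ranges are pairwise disjoint and the values within each group are distinct, so tie-breaking is never invoked, and the whole construction is clearly logspace-computable.

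The step I expect to be the main obstacle is coping with the $n/k$ consumption cap. Unlike seq-CC---where electing a candidate permanently removes \emph{all} of its satisfied votes from everyone's future tally---Greedy Monroe represents only a bounded batch each round, and since a vertex's $3N$ edge-votes are shared with its neighbors, the naive electorate gives a batch budget of only about $\tfrac{3}{2}N$, far too small to clear a vertex's edges. The extra candidates and their private voters exist exactly to resolve this tension: they inflate $n$ enough to lift $\lfloor n/k\rfloor$ above any vertex's total support, while sitting safely in the count-gap so that they never perturb the order in which the vertex-candidates are elected.
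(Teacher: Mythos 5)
Your proposal is correct and takes essentially the same approach as the paper: both reduce from LFMIS on 3-regular graphs, encode each edge as a large block of identical votes with $(m-a)$ singleton padding to order the vertex-candidates tie-free, add $m$ extra candidates with private vote blocks sized to lie between the blocked and available vertex ranges, and inflate $n$ so that the per-round cap $\lfloor n/k \rfloor$ exceeds every candidate's support, forcing each elected candidate to represent \emph{all} of its voters and thereby making elected vertices ``delete'' their neighbors exactly as in LFMIS. The only differences are the constants (the paper uses edge multiplicity $2m$ and extras with $5m+1,\ldots,6m$ votes, versus your $N=3m$ and extras strictly inside $(2N+m,\,3N)$), and your strictly separated ranges are, if anything, a marginally cleaner way to ensure tie-breaking is never invoked.
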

\begin{proof}
    Suppose we are given an instance $(G, v)$ of LFMIS, where $G$ is a 3-regular graph with $m$ vertices labeled $1$ to $m$. We will construct an election with candidates $C$, collection of votes $V$, and committee size $k$, and specify a candidate $c \in C$ such that the winning committee selected by GM includes $c$ if and only if $v$ is included in the LFMIS of $G$.

    Let $C$ have a candidate for each vertex of $G$.
    For each $\{a,b\} \in E(G)$, we add $2m$ votes $\{a,b\}$ to $V$.
    We also add $(m-a)$ singleton votes for each $a \in C$.
    Note that each candidate $a \in \{1, 2, \ldots, m\}$ has $6m + (m-a)$ votes. 
    Finally, add $m$ extra candidates, labeled $m+1$ to $2m$, that have $5m+1, 5m+2, \ldots, 6m$ votes.
    It is clear that there are no ties in the election.
    Finally, we set $c = v$ and $k = m$.
    We now show that the candidates picked from $V(G)$ correspond exactly to the vertices in the LFMIS of $G$. Observe that 
    \[\frac{n}{k} = \frac{(\frac{3m}{2})2m + \sum_{a=1}^m (m-a) + \sum_{a=1}^m (5m+a)}{m} = 9m \]

    Note that the maximum number of voters a candidate has is $6m + m-1 \leq 9m$. Thus, whenever a candidate is elected, it is selected as the representative of all of its voters. We now prove the following claim.

    \begin{claim}
        Let $a$ and $b$ be unelected candidates in round $r+1$. If $b$ shares votes with a previously elected candidate and $a$ does not, then $a$ has more votes in $V_{r+1}$ than $b$.
    \end{claim}
    \begin{proof}
    Since a candidate is the representative of all of its voters and $b$ shares $2m$ candidates with a previously elected voter, $b$ has at most $6m + m-1 - 2m = 5m-1$ votes in $V_{r+1}$.
    Any candidate $a$ that shares no votes with a previously elected candidate has at least $5m+1$ votes.
    \end{proof}

    By the claim above and the fact that candidates with more votes (i.e., vertices with earlier lexicographic order) are preferred, it is clear that the candidates picked will correspond exactly to the vertices in the LFMIS of $G$.
    
    Once adding candidates from $V(G)$ is suboptimal, we will start picking the candidates labeled $m+1$ to $2m$ until the algorithm terminates. 
    
    Note that $k$ is large enough (at least the number of vertices in $G$) so that every candidate corresponding to a vertex in the LFMIS of $G$ will be picked. Similarly, the number of extra candidates is large enough so that we are not forced to pick any candidates corresponding to the vertices in $G$ after the LFMIS candidates have been picked.
    
    Therefore, $v$ is picked in the committee by Greedy Monroe if and only if $v$ is part of the LFMIS. The reduction is clearly in logspace.
\end{proof}

\subsection{Method of Equal Shares + Phragmén}
\label{sec:mesphrag}

\citet{lac-sko:b:abc-rules} define the Method of Equal Shares as a two-phase rule like so. Suppose we want a committee of size $k$. 
In the first phase, we select $k'$ candidates as described in Definition~\ref{def:mes}. If $k' < k$, then we run another ABC rule on the remaining candidates to select a committee of size $k - k'$. Lackner and Skowron suggest using seq-Phragmén for the second phase, as defined below. We refer to this rule as MES+seq-P.

\begin{definition}[MES+seq-P]
    Suppose we are given candidates $C$, collection of votes $V$, and committee size $k$.
    MES+seq-P runs in two phases.
    In the first phase, we run MES: we select $k'$ candidates as described in Definition~\ref{def:mes}.
    If $k' < k$, then we run seq-Phragmén on the remaining candidates to pick a committee of size $k-k'$, where the loads of each voter $i$ are initialized by setting $y_0(i) = -x_{k'}(i)$.
\end{definition}

Although we show in our main text that MES is hard, we want to show the hardness of the rules as defined by \cite{lac-sko:b:abc-rules}. Thus, for the sake of completeness, we show that MES+seq-P is \p-hard.
Particularly, we reduce LFMIS to MES+seq-P such that the committee picked by MES corresponds exactly to the LFMIS of the input graph.

\begin{theorem}
Computing the winning committee for MES+seq-P is inherently sequential.
\end{theorem}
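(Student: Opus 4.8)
The plan is to reduce from the 3-regular version of LFMIS and to reuse the first-phase gadget of Theorem~\ref{thm:mes-main} almost verbatim, so that the MES phase elects exactly the vertices of the LFMIS, with the distinguished vertex-candidate $v$ chosen in the first phase if and only if $v$ is in the LFMIS. Because the committee size $k$ will be set well above the size of any independent set ($k \geq 3m > m \geq \ell$, where $\ell$ is the size of the LFMIS), the first phase always terminates early, so the entire difficulty is confined to showing that the seq-Phragmén second phase completes the committee without ever selecting a vertex-candidate, and in particular without selecting $v$.

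First I would set up the first-phase gadget with $p$ parallel copies of each edge-vote and initial budget $1/(3p)$, chosen so that every vertex-candidate is initially affordable with the same value $\rho = 1/(3p)$ (so lexicographic tie-breaking forces MES to mimic the greedy LFMIS order), and so that electing a candidate drains each of its neighbors strictly below the affordability threshold. This yields exactly the correspondence of Theorem~\ref{thm:mes-main}: after the first phase the elected vertex-candidates are precisely the LFMIS, and every surviving vertex-candidate $w$ is unaffordable because at least one of its edge-voters has been fully spent on an LFMIS neighbor.

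The new ingredient is a large collection of filler candidates, each supported by $q$ private (fresh) voters approving only that filler. I would tune $q$ and $p$ so that fillers are simultaneously (i) never affordable to MES, i.e.\ $q/(3p) < 1$, so that no filler is touched during the first phase; and (ii) strictly cheaper than every surviving vertex-candidate under seq-Phragmén. For (ii), a fresh filler has load-quotient $1/q - 1/(3p)$, whereas a non-LFMIS vertex $w$, having at least one fully-spent edge-voter, has quotient at least $1/(9p)$; the inequality $1/q - 1/(3p) < 1/(9p)$ reduces to $q > 9p/4$, so any integer $q$ with $9p/4 < q < 3p$ works (one exists once $p \geq 2$). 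Since each filler uses disjoint fresh voters, electing one leaves every other filler at the same small quotient while leaving every $w$ untouched, so seq-Phragmén strictly prefers a fresh filler in each of its rounds. With enough fillers to cover all $k - \ell$ remaining rounds, the committee is completed using only fillers, and $v$ lands in the final committee if and only if it was elected in the first phase, i.e.\ if and only if $v$ is in the LFMIS.

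Finally I would fix the global parameters by a counting argument: the budget $k/n$ must equal $1/(3p)$, i.e.\ $n = 3pk$, where $n$ is the number of edge-votes $3pm/2$ plus the filler-votes $qF$; solving $3pk = 3pm/2 + qF$ together with $F \geq k \geq m$ pins down admissible values (e.g.\ with $p = 2$, $q = 5$ one may take $k = F = 3m$). The hard part will be exactly this coupling: both phases draw on the single global budget $k/n$, so the fillers must thread the narrow window of being useless to MES yet dominant in Phragmén, and the number of fillers must be large enough to absorb every second-phase round without ever making a vertex-candidate attractive. Once these are arranged, the reduction is plainly computable in logarithmic space.
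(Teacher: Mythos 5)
Your construction is correct, and the arithmetic checks out: with $p=2$, $q=5$, $k=F=3m$ you get $3m$ edge-votes plus $15m$ filler-votes, so $n=18m=3pk$ and the initial budget is exactly $1/6$; every intact vertex-candidate is affordable at $\rho = 1/6$ and its voters are drained to zero when it is elected, any neighbor is then left with total budget at most $4/6 < 1$, the fillers sit at $5/6 < 1$ throughout the first phase, and in the second phase a fresh filler has load $1/5 - 1/6 = 1/30$ against at least $(1 - 4/6)/6 = 1/18$ for any surviving vertex-candidate, with $F = 3m \geq k - \ell$ fillers available to absorb every remaining round. Your proof skeleton is the same as the paper's---reduce from 3-regular LFMIS, arrange the MES phase to elect exactly the LFMIS among vertex-candidates, and arrange the seq-Phragm\'{e}n phase to fill all remaining slots with gadget candidates---but the gadget is genuinely different. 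The paper attaches $m$ copies of a 13-vertex tree (a root with four children, each with two leaves): the tree votes tune the budget to $1/3$ with no edge duplication, the $m$ roots are elected \emph{first} in phase one at $\rho = 1/4$, and the second-layer tree vertices, whose voters carry leftover budget $1/12$, serve as the phase-two fillers at load $1/12$ versus $1/9$ for surviving vertex-candidates. Your version instead duplicates each edge-vote and uses private-bloc fillers that MES never touches, threading the window $9p/4 < q < 3p$. What your route buys is a cleaner phase one (a verbatim reuse of the correspondence from Theorem~\ref{thm:mes-main}, with no extra candidates elected before the LFMIS ones) and an explicit parametric account of why the fillers are useless to MES yet dominant for Phragm\'{e}n; the cost is having to solve the coupling equation $n = 3pk$ by hand, whereas the paper's trees do that tuning and the filler job simultaneously. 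One immaterial loose end: you never fix where the fillers sit in the lexicographic order, but since fillers are never affordable in phase one and are strictly cheaper than every vertex-candidate in phase two, no tie between a filler and a vertex-candidate can arise, so any labeling works.
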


\begin{proof}    
Suppose we are given an instance $(G, v)$ of LFMIS, where $G$ is a 3-regular graph with $m$ vertices labeled $1$ to $m$. We will construct an election with candidates $C$, collection of votes $V$, and committee size $k$, and specify a candidate $c \in C$ such that the winning committee selected by MES+seq-P includes $c$ if and only if $v$ is included in the LFMIS of $G$.

As in our previous proof, we add to $C$ each vertex in $G$ and add to $V$ a vote for each edge in $G$.
We will also add many copies of candidates and votes equivalent to a 13-vertex tree graph, denoted $T$: $T$ has a root vertex with four children, each of which has two children. 
We add $13m$ ``copies of $T$'' to our election system, i.e.,
we add $13m$ extra candidates to $C$, labeled $m+1$ to $14m$, and $12m$ votes to $V$ equivalent to the edges of $T$.
Let $k = m/2 + 4m$ and $c=v$.

We now show that the candidates picked from $V(G)$ by MES+seq-P correspond exactly to the vertices in the LFMIS of $G$. Specifically, these candidates are picked in the first phase, by MES.

Recall that initially we are running MES. We have $n = 3m/2 + 12m$ votes, so the initial budget for each voter is $k/n = (m/2 + 4m)/(3m/2+12m) = 1/3$.
Observe that the candidates corresponding to the root of $T$ have four voters each. So, initially, all of these roots can be elected by incurring a cost of $1/4$ from each of their voters. Since this is the minimum cost, $\rho_c$, MES will pick each root for the winning committee, leaving each root's voters with budget $1/3 - 1/4 = 1/12$. After $m$ roots have been elected, observe that no other candidate corresponding to a vertex from $T$ can be elected by MES; those in the second layer (children of the roots) have voters with budget $1/3 + 1/3 + 1/12 < 1$, and those in the last layer (leaf vertices of $T$) have single voters with budget $1/3$.
We are now in a situation similar to the one in the proof of Theorem~\ref{thm:mes-main}; the candidates corresponding to vertices in $V(G)$ can be elected by incurring a cost of $1/3$ on each of their voters. As argued in said proof, the candidates corresponding exactly to the LFMIS of $V(G)$ will be elected. Once all of these candidates have been elected, no candidates' voters have the budget to elect them and the first phase concludes.

Now we are in the second phase and running seq-Phragmén with initial loads $y_{0}(i) = -x_{k'}(i)$
for each voter $i$
where $x_{k'}(i)$ is the budget of voter $i$ at the end of MES's last round.
We now calculate the loads of electing every candidate not yet elected.
For every candidate corresponding to $V(G)$, we have 
\[
\ell_1(c) = \frac{1 + \left( -\frac{1}{3} - \frac{1}{3} \right)}{3} = \frac{1}{9}
\text{ or }
\ell_1(c) = \frac{1 + \left( -\frac{1}{3}\right)}{3} = \frac{2}{9}
\]
since it has either 1 or 2 voters with budget $1/3$ at the end of phase 1.
The candidates corresponding to vertices in the second layer of $T$ have loads
\[
\ell_1(c) = \frac{1 + \left( -\frac{1}{3} - \frac{1}{3} 
-\frac{1}{12} \right)}{3} = \frac{1}{12}
\]
The candidates corresponding to vertices in the last layer of $T$ have loads
\[
\ell_1(c) = \frac{1 + \left( -\frac{1}{3} \right)}{3} = \frac{2}{9}
\]

We need not consider the roots of $T$ since those were already elected in phase 1. Observe that the candidates corresponding to vertices in the second layer of $T$ have the lowest load. Moreover, note that each of the votes for these vertices are disjoint, since these vertices share no edges, i.e., we do not need to be concerned with the updated loads of each voter in the next round. Thus, all $4m$ of these vertices can be picked sequentially until we elected a total of $k$ candidates.

Note that $k$ is large enough (at least the number of vertices in $G$) so that every candidate corresponding to a vertex in the LFMIS of $G$ will be picked. Similarly, the number of extra candidates is large enough so that we are not forced to pick any candidates corresponding to the vertices in $G$ after the LFMIS candidates have been picked; there are $5m$ extra candidates that can be picked (five for each tree) and $m/2 + 4m = 4.5m < 5m$ candidates that need to be elected.

Therefore, $v$ is picked in the winning committee by MES+seq-P if and only if $v$ is part of the LFMIS. The reduction is clearly in logspace. \end{proof}

\subsection{MES without tie-breaking}
\label{sec:mes-noties}

\noindent
{\bf Theorem~\ref{mes:noties}.}
{\em Computing the winning committee for (the first phase of) MES is inherently sequential even when tie-breaking is never invoked.}

\begin{proof}
    Suppose we are given an instance $(G, v)$ of LFMIS, where $G$ is a 3-regular graph with $m$ vertices labeled $1$ to $m$. Without loss of generality, we will assume that $m \geq 12$.
    We will construct an election with candidate $C$, collection of votes $V$, and committee size $k$, and specify a candidate $c \in C$ such that the winning committee selected by MES includes $c$ if and only if $v$ is included in the LFMIS of $G$.

    Let $C$ have a candidate for each vertex of $G$.
    For each $\{a,b\} \in E(G)$, we add $m^2$ votes $\{a,b\}$ to $V$.
    We also add $(m-a)$ votes for each $a \in C$.
    Note that each candidate  $a$ has $3m^2 + (m-a)$ votes. 
    Finally, add an extra candidate, labeled 0, that has $m^3$ singleton votes.
    It is clear that there are no ties in the election.
    Note that we have a total of $n = 3m^3/2 + m(m+1)/2 + m^3$ votes.
    Finally, we set $c = v$ and $k = m+1$.
    Observe that the initial budget is
    \[x_0(i) = \frac{k}{n} = \frac{m+1}{3m^3/2 + m(m-1)/2 + m^3}\]
    for each voter $i$.
    We now show that the candidates corresponding to $V(G)$ picked by MES correspond exactly to the vertices in the LFMIS of $G$.
    We prove this statement via the following two claims for all candidates other than candidate 0.
    \begin{claim}
    \label{clm:mes-nt-1}
        For an unelected candidate $a>0$, if its voters have not
        already voted for another candidate, $a$'s voters have the budget to elect $a$.
    \end{claim}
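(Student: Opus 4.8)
The plan is to lower-bound the combined budget of $a$'s supporters and show it is at least $1$, which is exactly the electability condition $\sum_{i \in N(a)} x_r(i) \geq 1$ from Definition~\ref{def:mes}. The crucial simplification comes from the hypothesis: since no voter in $N(a)$ has yet paid toward a previously elected candidate, each such voter still holds its full initial budget $x_r(i) = x_0(i) = k/n$. Hence the combined budget is exactly $\|N(a)\| \cdot k/n$, and it suffices to bound $\|N(a)\|$ from below and then verify a single numerical inequality.

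First I would count $\|N(a)\|$. Since $a$ is a vertex of the $3$-regular graph $G$, it lies on exactly three edges, each contributing $m^2$ edge-votes that approve $a$, together with its $(m-a)$ singleton votes; thus $\|N(a)\| = 3m^2 + (m-a)$. This quantity is minimized when $a = m$, giving the worst case $\|N(a)\| \geq 3m^2$. Consequently the combined budget satisfies
\[
\sum_{i \in N(a)} x_r(i) \;=\; \|N(a)\| \cdot \frac{k}{n} \;\geq\; \frac{3m^2\,k}{n}.
\]

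It then remains to check $3m^2 k / n \geq 1$. Substituting $k = m+1$ and $n = \tfrac{1}{2}\bigl(5m^3 + m^2 - m\bigr)$ (the denominator of $x_0$), this is equivalent to $6m^2(m+1) \geq 5m^3 + m^2 - m$, i.e.\ to $m^3 + 5m^2 + m \geq 0$, which holds for every $m \geq 1$ (in particular for $m \geq 12$); moreover the inequality is strict, so $a$'s supporters can always afford $a$, establishing the claim. I do not anticipate a genuine obstacle here: the reasoning is a direct budget count, and the only care required is to take the worst case $a = m$ over the singleton padding and to keep the arithmetic in the denominator of $x_0$ consistent.
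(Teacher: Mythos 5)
Your proposal is correct and follows essentially the same route as the paper's proof: observe that untouched voters retain the full initial budget $k/n$, take the worst case $a=m$ so that $\|N(a)\| \geq 3m^2$, and verify the single inequality $3m^2(m+1)/n \geq 1$, which reduces to $m^3 + 5m^2 + m \geq 0$. Your value $n = \tfrac{1}{2}(5m^3+m^2-m)$ agrees with the denominator of $x_0$ used in the paper (the paper's in-text expression for $n$ contains a sign typo, $m(m+1)/2$ for $m(m-1)/2$, which you correctly avoid), so the argument goes through exactly as in the paper.
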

    \begin{proof}
        A candidate can be elected in round $r$ if $\sum_{i \in N(a)} x_r(i) \geq 1$.
        If $a$'s voters have not voted for anyone else, then $x_r(i) = x_0(i)$. Thus, 
        for each candidate $a$, we have 
        \begin{align*}
        \sum_{i \in N(a)} x_0(i) &=  x_0(i) \times (3m^2 + (m-a) ) \\
        &\geq  x_0(m) \times (3m^2 + (m-m) ) \\
        &\geq \frac{3m^2(m+1)}{3m^3/2 + m(m-1)/2 + m^3}
        \end{align*} 
        The R.H.S. is $\geq 1$ for all $m \geq 1$.
    \end{proof}
    
    \begin{claim}
    \label{clm:mes-nt-2}
    For candidates $a > 0$ and $b > 0$ that share a vote, if $b$ was selected by the MES in a previous round, then the voters of $a$ do not have the budget to elect $a$.
    \end{claim}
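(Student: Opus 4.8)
The plan is to bound the total remaining budget $\sum_{i \in N(a)} x_r(i)$ of $a$'s voters at the current round $r$ and show it is strictly below $1$, which by Definition~\ref{def:mes} means $a$ cannot be elected. Since $a$ and $b$ are both genuine vertex-candidates that share a vote, the votes they share are exactly the $m^2$ edge-votes $\{a,b\}$; call this set $S \subseteq N(a) \cap N(b)$, so $\|S\| = m^2$. I would split $N(a)$ into $S$ and the remaining $2m^2 + (m-a)$ voters. Budgets never increase in MES, so each of the remaining voters has budget at most $x_0$, contributing at most $(2m^2 + (m-a))\,x_0 \leq (2m^2 + m)\,x_0$ in total. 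The heart of the argument is that the voters in $S$ were heavily drained when $b$ was elected, and because budgets only decrease they stay drained through round $r$.

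Next I would quantify that drainage. Suppose $b$ was elected with price $\rho_b$ in some earlier round $r_b$. By Definition~\ref{def:mes} we have $\sum_{i \in N(b)} \min(x_{r_b-1}(i),\rho_b) = 1$ with every summand at most $\rho_b$, so $1 \leq \|N(b)\|\,\rho_b$ and hence $\rho_b \geq 1/\|N(b)\| \geq 1/(3m^2+m)$, using $\|N(b)\| = 3m^2 + (m-b) \leq 3m^2 + m$. Each $i \in S$ approves $b$, so its budget right after round $r_b$ is $x_{r_b}(i) = \max(x_{r_b-1}(i) - \rho_b,\, 0) \leq \max(x_0 - \rho_b,\, 0)$; combining $\rho_b \geq 1/(3m^2+m)$ with the easily checked fact that $x_0 > 1/(3m^2+m)$, this is at most $x_0 - 1/(3m^2+m)$. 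Since budgets do not increase, $x_r(i) \leq x_0 - 1/(3m^2+m)$ for every $i \in S$.

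Combining the two contributions yields
\[
\sum_{i \in N(a)} x_r(i) \;\leq\; \bigl(3m^2 + (m-a)\bigr)\,x_0 - \frac{m^2}{3m^2+m} \;\leq\; (3m^2+m)\,x_0 - \frac{m^2}{3m^2+m}.
\]
Substituting $x_0 = (m+1)/n$ with $n = 3m^3/2 + m(m-1)/2 + m^3$ turns the target inequality $\sum_{i\in N(a)} x_r(i) < 1$ into a rational inequality in $m$; clearing (positive) denominators reduces it to the polynomial condition $2m^3 - 21m^2 - 17m - 3 > 0$, which fails at $m=11$ but holds for all $m \geq 12$, exactly matching the standing assumption. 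I expect the main obstacle to be purely the bookkeeping in this last estimate: one must make sure the bound on $S$ charges only the drainage caused by $b$ (any further drainage of $a$'s other voters only strengthens the bound), verify $x_0 > 1/(3m^2+m)$ so the cap in the budget-update rule does not spoil the estimate, and keep the constants tight enough that the threshold lands at $m \geq 12$ rather than something larger. Everything else follows directly from monotonicity of the budgets and the price characterization in Definition~\ref{def:mes}.
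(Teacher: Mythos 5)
Your proposal is correct and follows essentially the same route as the paper: split $N(a)$ into the $m^2$ votes shared with $b$ and the remaining $2m^2+(m-a)$ voters, bound the latter by $x_0$ each and the former by $x_0 - 1/(3m^2+m)$ each (your bound $\rho_b \geq 1/\|N(b)\| \geq 1/(3m^2+m)$ and the $\max(\cdot,0)$ cap make explicit what the paper states more loosely), arriving at exactly the paper's estimate $(2m^2+m)x_0 + m^2\bigl(x_0 - \tfrac{1}{3m^2+m}\bigr)$. Your reduction to $2m^3 - 21m^2 - 17m - 3 > 0$, failing at $m=11$ and holding for $m \geq 12$, is arithmetically equivalent to the paper's simplified expression and matches its threshold.
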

    \begin{proof}

    Candidate $b$ has $3m^2 + (m-b)$ votes. The minimum amount that each voter spent to elect $b$ is $1/(3m^2 + (m-b))$. Of these votes,
    $m^2$ are shared with $a$.
    Thus, the maximum amount of total budget that the voters of $a$ have is
    \begin{align*}
        \sum_{i \in N(a)} x_r(i) &\leq 
         (2m^2 + (m-a)) x_0(i)   + \left( x_0(i) - \frac{1}{(3m^2 + (m-b))} \right) m^2 \\
         &\leq (2m^2 + m) x_0(i)   + \left( x_0(i) - \frac{1}{(3m^2 + m)} \right) m^2
    \end{align*}
    Substituting $x_0(i)$ in the R.H.S. and simplifying gives us the following expression:
    \[ \frac{2(17m+8)}{5(5m^2+m-1)} + \frac{1}{3(3m+1)} + \frac{13}{15} \]
    This is clearly decreasing as $m$ grows, and it can be verified to be $<1$ for all $ m \geq 12$.
    \end{proof}
    
    Recall that MES prefers picking the candidate that minimizes the maximum cost incurred by each voter. Candidate 0 is picked first since it has the most votes and the cost incurred is split among these voters.
    Similar reasoning shows that, candidate 1 is picked since it has the most voters ($3m^2 + m-1$).
    By Claims~\ref{clm:mes-nt-1} and~\ref{clm:mes-nt-2} and the fact that candidates with more votes (i.e., vertices with earlier lexicographic order) are preferred, it is clear that the candidates picked after candidate 1 will correspond exactly to the vertices in the LFMIS of $G$.
    Since we can pick up to $m-1$ remaining candidates, we are guaranteed to pick all the candidates corresponding to the LFMIS of $G$.
    
    Note that the extra padding of $m^3$ votes given by candidate 0 is what allows us to be in the ``goldilocks zone'' where the budget is large enough so that every candidate has large enough initial budget to be elected, 
    but taking away some of the budget of $m^2$ of its votes does not allow it to be elected.
    
    It is clear that $v \in V(G)$ is elected by MES if and only if $v$ is part of $G$'s LFMIS, and that the reduction is clearly in logspace, thus completing our proof.
    \end{proof}

\end{document}